\newtheorem{theorem}{Theorem}[section]
\newtheorem{lemma}[theorem]{Lemma}
\newtheorem{proposition}[theorem]{Proposition}
\newtheorem{remark}[]{Remark}
\newtheorem{assumption}[]{Assumption}
\newcommand\rout{\bgroup\markoverwith{\textcolor{red}{/}}\ULon} 
\definecolor{ORANGE}{RGB}{234,100,10}
\definecolor{ORANGE1}{RGB}{150,70,10}
\begin{document}

\title{Impacts of Grid Structure on PLL-Synchronization Stability of Converter-Integrated Power Systems}

\author{Linbin Huang, Huanhai Xin, Wei Dong and Florian D{\"o}rfler
\thanks{L. Huang is with the College of Electrical Engineering at Zhejiang University, Hangzhou, China, and the Department of Information Technology and Electrical Engineering at ETH Zurich, Switzerland. (Email: \text{huanglb@zju.edu.cn})}
\thanks{H. Xin and W. Dong are with the College of Electrical Engineering at Zhejiang University, Hangzhou, China. (Email: \text{xinhh@zju.edu.cn}, \text{eedongwei@zju.edu.cn})}
\thanks{F. D{\"o}rfler is with the Department of Information Technology and Electrical Engineering at ETH Zurich, Switzerland. (Email: \text{dorfler@ethz.ch})}}


\maketitle


\begin{abstract}
Small-signal instability of grid-connected power converters may arise when the converters use a phase-locked loop (PLL) to synchronize with a weak  grid. Commonly, this stability problem (referred as PLL-synchronization stability in this paper) was studied by employing a single-converter system connected to an infinite bus, which however, omits the impacts of power grid structure and the interactions among multiple converters. Motivated by this, we investigate how the grid structure affects PLL-synchronization stability of multi-converter systems. By using Kron reduction to eliminate the interior nodes, an equivalent reduced network is obtained which  contains only the converter nodes. We explicitly show how the Kron-reduced multi-converter system can be decoupled into its modes. This modal representation allows us to demonstrate that the smallest eigenvalue of the grounded Laplacian matrix of the Kron-reduced network dominates the stability margin. We also carry out a sensitivity analysis of this smallest eigenvalue to explore how a perturbation in the original network affects the stability margin. On this basis, we provide guidelines on how to improve the  PLL-synchronization stability of multi-converter systems by PLL-retuning, proper placement of converters or enhancing some weak connection in the network. Finally, we validate our findings with simulation results based on a 39-bus test system.
\end{abstract}

\begin{IEEEkeywords}
Grid structure, Kron reduction, power converters, phase-locked loop (PLL), small-signal stability, stability.
\end{IEEEkeywords}


\section{Introduction}

\IEEEPARstart{W}{ith} the development of renewables, energy storage systems, microgrids, and high-voltage DC (HVDC) systems, more and more power-electronic devices (i.e., power converters) are integrated in modern power systems, and we can foresee a future of converter-dominated power systems \cite{carrasco2006power}. The dynamics of power converters are usually different from synchronous generators (SGs), especially when they are operated in grid-following mode which utilizes a phase-locked loop (PLL) for grid-synchronization \cite{rocabert2012control, FM-FD-GH-DH-GV:18}.

The SG has physical rotating part which determines the angular frequency and make the SG synchronize with the power grid spontaneously \cite{dorfler2012synchronization}, while the converters are composed of static semiconductor components which have high controllability and flexibility. Particularly, the grid-following converter utilizes a  PLL to realize voltage orientation and grid frequency tracking, thereby sharing totally different synchronization mechanism from SGs \cite{blaabjerg2006overview}. The PLL determines the power angle and the largest time constant of the closed-loop converter system and thus dominates its input/output dynamics as seen from the grid side.
Conventionally, the design of a PLL assumes constant voltage magnitude and stiff frequency at the measuring point, so the dynamics of the  PLL are decoupled from the other parts of the converter system (e.g., current control loop, \emph{LCL}, etc.). In this way, the PLL can be regarded as a second-order filter to track the grid frequency \cite{golestan2015conventional}, and a larger bandwidth improves the frequency tracking capability. 

However, in a real converter system, the dynamics of PLL can strongly interact with the other parts, especially when the converter is integrated in weak (non-stiff) grids that feature low short-circuit ratios. Moreover, the PLL induces negative resistor effect on the equivalent input/output admittance model of the converter, which may result in small-signal instability and thus oscillations of the PLL's output \cite{wang2018unified, wen2016analysis}. 
In the rest of the paper, we will refer to this stability issue as PLL-synchronization stability since it is caused by PLL.

The PLL-synchronization instability has been widely analyzed via a single converter connected to an infinite bus, which showed that instabilities may arise under high grid impedance (i.e., weak grid condition) \cite{liu2018frequency, huang2018adaptive}. However, the interactions among multiple converters and the impacts of grid structure cannot be revealed in such a system, thus it is still unclear how the PLL-synchronization instability results from different grid structure of a multi-converter system.

Commonly, the small-signal stability of a multi-converter system (e.g., in microgrids or in low-voltage distribution grids) is evaluated by deriving the state-space model of the entire system and then obtaining the eigenvalues, but this method offers little physical insights into the stability mechanism. A reduced-order model was proposed in \cite{purba2017network} to study multi-converter systems, which uses an aggregate model to represent the dynamics of multiple converters. 
However, the aggregate model omits the dynamic interactions among the converters. In \cite{jafarpour2019small}, the small-signal stability of an inverter network was studied by using time-scale analysis, which leads to an analytic sufficient condition for local exponential stability. In \cite{xin2017generalized}, the converters are modeled as transfer function matrices to describe how voltage perturbations affect the active and reactive power, and the dynamics of the converters are decoupled to understand the overall system stability, which lays the foundation of the analysis in this paper. However, the model in \cite{xin2017generalized} can hardly deal with networks that have interior (non-converter) buses. 
In general, it is not fully understood how interactions between PLLs and other grid components as well as the grid structure give rise to instability.
In short, PLL-synchronization stability in multi-converter systems still remains to be investigated thoroughly.

The grid structure, i.e., the topology and the coupling strength (admittance) of electric transmission network, has been shown to have a significant impact on the stability issues of conventional power systems, e.g., transient and small signal stability \cite{hill2006power,dorfler2012synchronization,dorfler2013synchronization,song2017network}. Moreover, in microgrids that consist of droop-controlled (grid-forming) converters, the grid structure will significantly affect the synchronization (or large-disturbance stability) of the converters \cite{simpson2013synchronization}.
However, for a multi-converter (PLL-based) system, it still remains unknown how the grid structure affects the stability margin. This paper aims at filling this gap, as the PLL-synchronization instability has become a major concern for PLL-based converters \cite{wen2015impedance,huang2018adaptive}. Particularly, we attempt to answer the following questions that motivate this paper. How do the PLL-based converters interact with each other via the transmission network?
How does the grid structure affect the PLL-synchronization stability of the system? What is the effect of a perturbation in the transmission network on the stability? Which transmission line will most sensitively influence the stability margin? 

To provide insightful answers to these questions, we pursue an analytic approach based on a simplified model, though our results are also numerically validated on a detailed simulation model. In particular, we derive a small-signal model for converters to describe the synchronization process achieved by PLL and to obtain the PLL-synchronization stability margin. We provide an explicit analysis of a multi-converter power system under simplifying assumptions on the interconnecting lines and loads so that the model is amenable to a simultaneous diagonalization procedure. As a result, we reveal the device-level and system-level aspects of the overall power grid stability affected by the grid structure. The stability margin is determined by the stability of a single converter connected to infinite bus through a line of strength corresponding to the smallest eigenvalue of the underlying grounded and Kron-reduced network Laplacian matrix. We study the effects of  converter parameters (e.g., the PLL bandwidth) as well as the grid structure on the stability margin by means of explicit eigenvalue sensitivity calculations. We illustrate our insightful results and the utility of our approach through a numerical multi-converter case study and via nonlinear simulations.

Based on our insights, we offer constructive countermeasures to PLL-induced instabilities, such as PLL retuning or, on the planning level, proper placement of converter-induced generation or enforcing weak grid connections.

The rest of this paper is organized as follows: Section II presents modeling of multi-converter systems and Kron reduction of the electrical network. Section III analyzes the PLL-synchronization stability and shows how the stability margin is related to the grid structure. 
Section IV provides sensitivity analysis of the grid structure on the stability. Simulation results are provided in Section V. Section VI concludes the paper.

\section{Modeling of Multi-Converter Systems}

\begin{figure}[!t]
	\centering
	\includegraphics[width=3.2in]{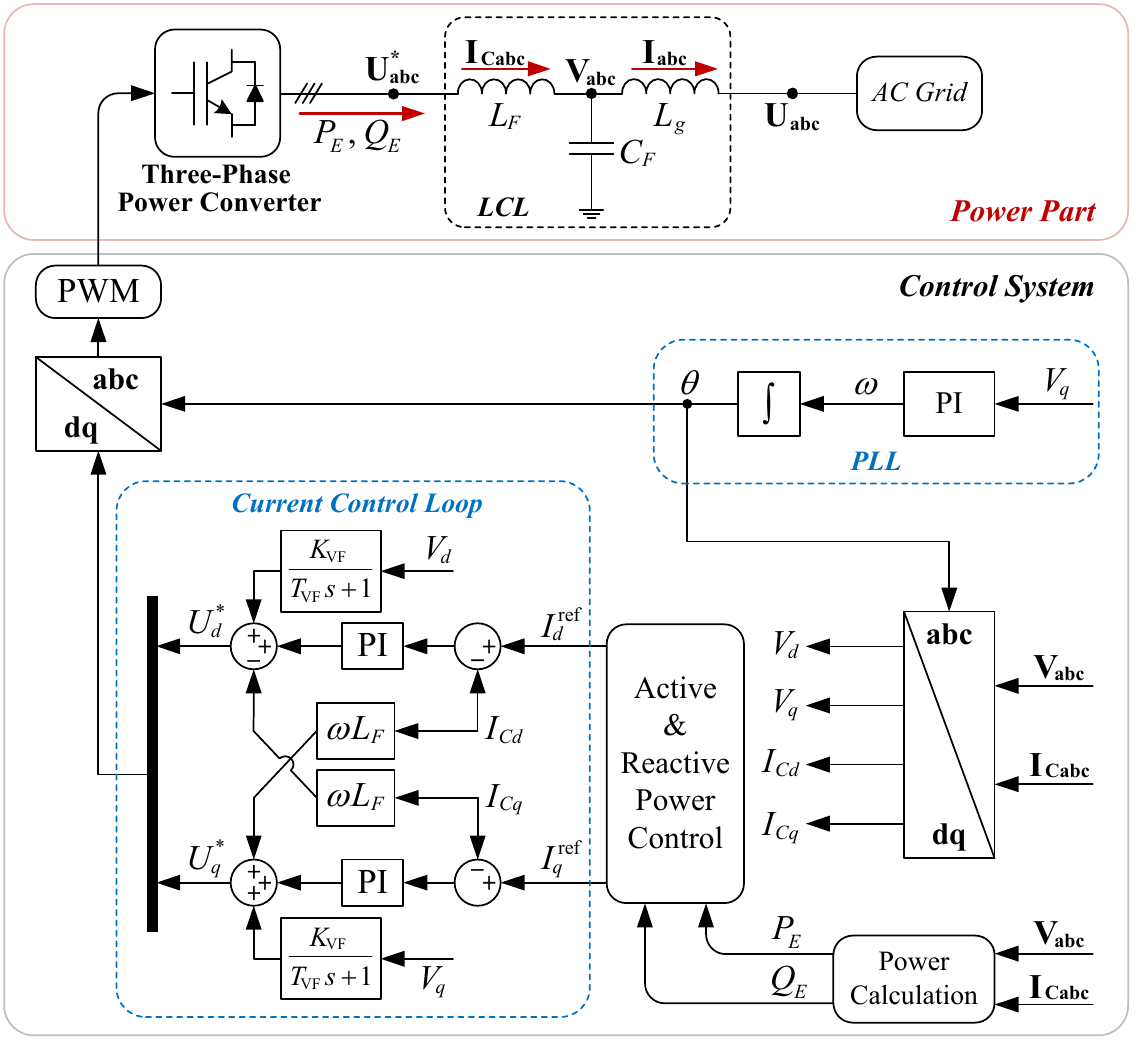}
	\vspace{-3mm}
	\caption{One-line diagram of a PLL-based power converter.}
	\vspace{-3mm}
	\label{Fig_Single_Converter}
\end{figure}

Fig.\ref{Fig_Single_Converter} shows a three-phase power converter which is connected to the ac grid via an \emph{LCL} filter. The converter applies a PLL for grid synchronization. ${{\bf{V}}_{{\bf{abc}}}}$ is the three-phase capacitor voltage of the \emph{LCL}. ${{\bf{I}}_{{\bf{Cabc}}}}$ is the converter-side current. ${{\bf{I}}_{{\bf{abc}}}}$ is the current that injected into the ac grid. ${\bf{U}}_{{\bf{abc}}}^{\bf{*}}$ is the converter's voltage output that determined by the modulation. ${{\bf{U}}_{{\bf{abc}}}}$ is the terminal voltage of the ac grid. When the power grid is balanced (i.e., only the positive-sequence components need to be considered), these three-phase signals can be represented by two-dimensional vectors in the static $\alpha\beta$ frame or the synchronously-rotating $dq$ frame.

\subsection{Admittance Matrix of PLL-based Power Converters}

Consider a three-phase grid-connected power converter. Under balanced grid condition, let $u$ be the two-dimensional voltage vector at the terminal of this converter, and let $i$ be the two-dimensional current vector injected into this converter from the terminal. Assuming that $u$ and $i$ are in the same coordinate (e.g., $\alpha\beta$ frame or $dq$ frame), the linearized model of this converter can be represented by a $2 \times 2$ transfer function matrix ${\bf Y_C(s)}$ (defined as admittance matrix) as $i={\bf Y_C(s)}u$.
Note that in the admittance model we omit the time-delay effect of PWM to simplify the expression, which is reasonable when focusing on the time-scale of PLL dynamics \cite{harnefors2007input}.

We remark that ${\bf Y_C(s)}$ is the closed-loop transfer function of the converter system when its terminal is directly connected to a stiff grid. Hence, ${\bf Y_C(s)}$ should be designed to be stable, i.e., the closed-loop poles solely lie in the open left-half of the complex $s$ plane, such that the converter can operate stably when connected to a stiff grid.

The admittance matrix of the PLL-based converter in Fig.\ref{Fig_Single_Converter} in the global $dq$-frame is 
\begin{equation}
- \left[ {\begin{array}{*{20}{c}}
	{\Delta {I'_{d}}}\\
	{\Delta {I'_{q}}}
	\end{array}} \right] = {{\bf{Y}}_{{\bf{C}}}}(s)\left[ {\begin{array}{*{20}{c}}
	{\Delta {U'_d}}\\
	{\Delta {U'_q}}
	\end{array}} \right]\,,\\	
\label{eq:Y_C}
\end{equation}
where ${\left[ {\begin{array}{*{20}{c}} {\Delta {I'_{d}}}&{\Delta {I'_{q}}} \end{array}} \right]^\top}$ and ${\left[ {\begin{array}{*{20}{c}} {\Delta {U'_d}}&{\Delta {U'_q}} \end{array}} \right]^\top}$ are respectively the perturbed vectors of the converter's current output and terminal voltage in the global $dq$-frame. We provide a detailed derivation of this admittance matrix in Appendix \ref{Appenxix:admittance Matrix} based on complex vectors and transfer function matrices \cite{harnefors2007modeling}.

\subsection{Coupling of Multiple Converters via Electrical Network}

In this subsection, we will show how the converters' dynamics are coupled via the electrical network.
Since we are interested in providing insightful and analytic insights into how the converter and network parameters affect the overall system stability, we make the following   assumptions leading to a simplified -- albeit analytically tractable -- model.

\begin{assumption}
\label{assumptions}
\leavevmode 
\begin{enumerate}[(i)]
	
	\item All the converters adopt the same control scheme and use the same parameters, thereby having the very same equivalent admittance matrices when formulated as (\ref{eq:Y_C});
	
	\item all the lines have the same $R/L$ ratio;
	
	\item the loads are simple constant current loads in the global {dq}-frame that play no role for the linearized model. 
	
\end{enumerate}
\end{assumption}

These above assumptions lead to a final multi-converter model that is amenable to a simultaneous diagonalization procedure and can be decoupled into several subsystems corresponding to the Laplacian modes of the network interaction. This setup includes any low-voltage grid without synchronous generators and connected to an infinite bus (e.g., wind farms and microgrids in grid-connected mode). Moreover, we particularly focus on the network that interconnects the converters.

Fig.\ref{Fig_Multi_Converter} shows a multi-converter system, in which the converters are interconnected via an electrical network (the gray part within the dash line). An infinite bus (with angle $\theta_{G}$) is needed in this scenario because the  PLL-based converters need a frequency reference. For $n,m \in \mathbb{N}$, the electrical network contains $m$ converter nodes (denoted by $1{\rm{st}},2{\rm{nd}},...,m{\rm{th}}$), $n-m$ interior nodes i.e., the gray nodes in Fig.\ref{Fig_Multi_Converter} (denoted by $\left( {m + 1} \right){\rm{th}},...,n{\rm{th}}$), and one infinite-bus node (the $\left( {n + 1} \right){\rm{th}}$ node). 
The transmission lines are assumed to be inductive, and the loads are modeled as constant current sources. For a transmission line that connects node $i$ and node $j$ ($i,j \in \mathcal{I}_{n+1}$, and the set $\mathcal{I}_{n+1} = \left\{{1,...,n+1} \right\}$), the dynamic equation can be expressed in the global coordinate as \cite{huang2019grid,harnefors2007modeling}
\begin{equation}
\begin{split}
\left[ {\begin{array}{*{20}{c}}
	{{\Delta I'_{d,ij}}}\\
	{{\Delta I'_{q,ij}}}
	\end{array}} \right] &= {B_{ij}}{\bf{F}}(s)\left[ {\begin{array}{*{20}{c}}
	{{\Delta U'_{d,i}} - {\Delta U'_{d,j}}}\\
	{{\Delta U'_{q,i}} - {\Delta U'_{q,j}}}
	\end{array}} \right]\,,\\
{\bf{F}}(s) &= \frac{{{1}}}{{{(s+\tau)^2/\omega _0} + \omega _0}}\left[ {\begin{array}{*{20}{c}}
	{s+\tau}&{ {\omega _0}}\\
	{ - {\omega _0}}&{s+\tau}
	\end{array}} \right]\,,		\label{eq:nodeij}
\end{split}
\end{equation}
where ${\left[ {\begin{array}{*{20}{c}} {{\Delta I'_{d,ij}}}&{{\Delta I'_{q,ij}}} \end{array}} \right]^\top}$ is the current vector from node $i$ to node $j$, ${\left[ {\begin{array}{*{20}{c}} {{\Delta U'_{d,i}}}&{{\Delta U'_{q,i}}} \end{array}} \right]^\top}$ is the voltage at node $i$, $B_{ij} = 1/(L_{ij} \times \omega _0)$ is the susceptance between $i$ and $j$, and $\tau$ is the identical $R_{ij}/L_{ij}$ ratio of all the lines.

\begin{figure}[!t]
	\centering
	\includegraphics[width=3in]{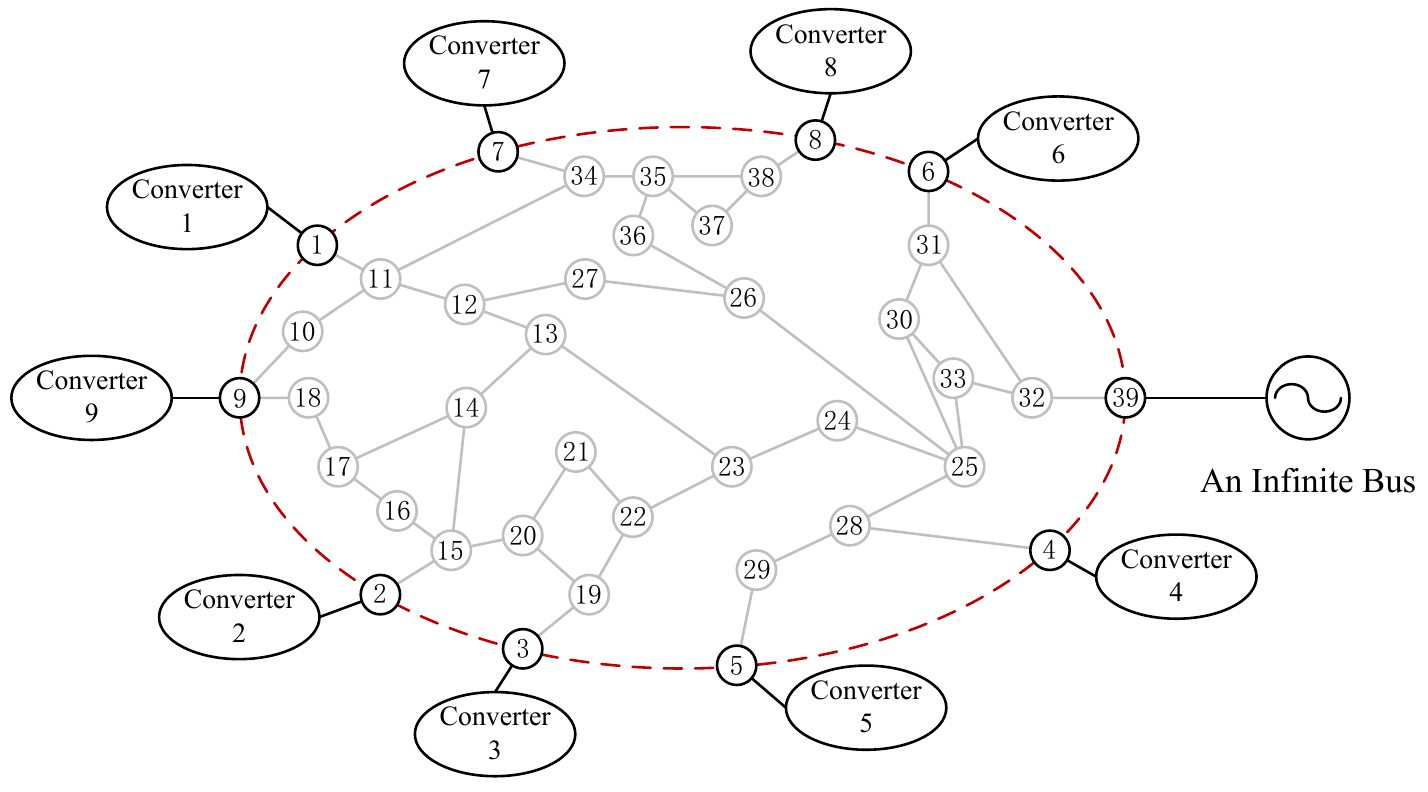}
	\vspace{-3mm}
	\caption{A multi-converter system.}
	\vspace{-3mm}
	\label{Fig_Multi_Converter}
\end{figure}

Since the voltage vector of the infinite bus is constant (i.e., $1+j0$ p.u.) in the global coordinate, the infinite bus can be assigned as the grounded node in the small-signal model. Hence, the electrical network contains self-loops (i.e., edges between the grounded node and the other nodes) \cite{dorfler2013kron}. 

Let $Q \in \mathbb{R}^{n \times n}$ be the grounded Laplacian matrix of a electrical network that encodes the line topology and weightings, calculated by
\begin{equation}
\begin{split}
{Q_{ij}} &=  - {B_{ij}},\;i \ne j\,,\\
{Q_{ii}} &= \sum\limits_{j = 1}^{n} {{B_{ij}}} + B_{i,n+1}\,,		\label{eq:Qmatrix}
\end{split}
\end{equation}
and $Q \otimes {\bf{F}}(s)$ is the corresponding admittance matrix ($\otimes$ denotes the Kronecker product).

Then, by eliminating the interior nodes through Kron reduction one obtains an equivalent grounded network that only contains the converter nodes, as shown in Fig.\ref{Fig_Network_KR}. The grounded Laplacian matrix of the Kron-reduced network can be calculated by
\begin{equation}
{Q_{{\rm{red}}}} = {Q_1} - {Q_2} \times Q_4^{ - 1} \times {Q_3}\,,		\label{eq:Qred}
\end{equation}
where ${Q_1} \in {\mathbb{R}^{m \times m}}$, ${Q_2} \in {\mathbb{R}^{m \times \left( {n - m} \right)}}$, ${Q_3} \in {\mathbb{R}^{\left( {n - m} \right) \times m}}$ and ${Q_4} \in {\mathbb{R}^{\left( {n - m} \right) \times \left( {n - m} \right)}}$ are the submatrices of $Q$ as
\begin{equation}
Q = \left[ {\begin{array}{*{20}{c}}
	{{Q_1}}&\vline& {{Q_2}}\\
	\hline
	{{Q_3}}&\vline& {{Q_4}}
	\end{array}} \right].
\label{eq:subQ}
\end{equation}

\begin{figure}[!t]
	\centering
	\includegraphics[width=3.5in]{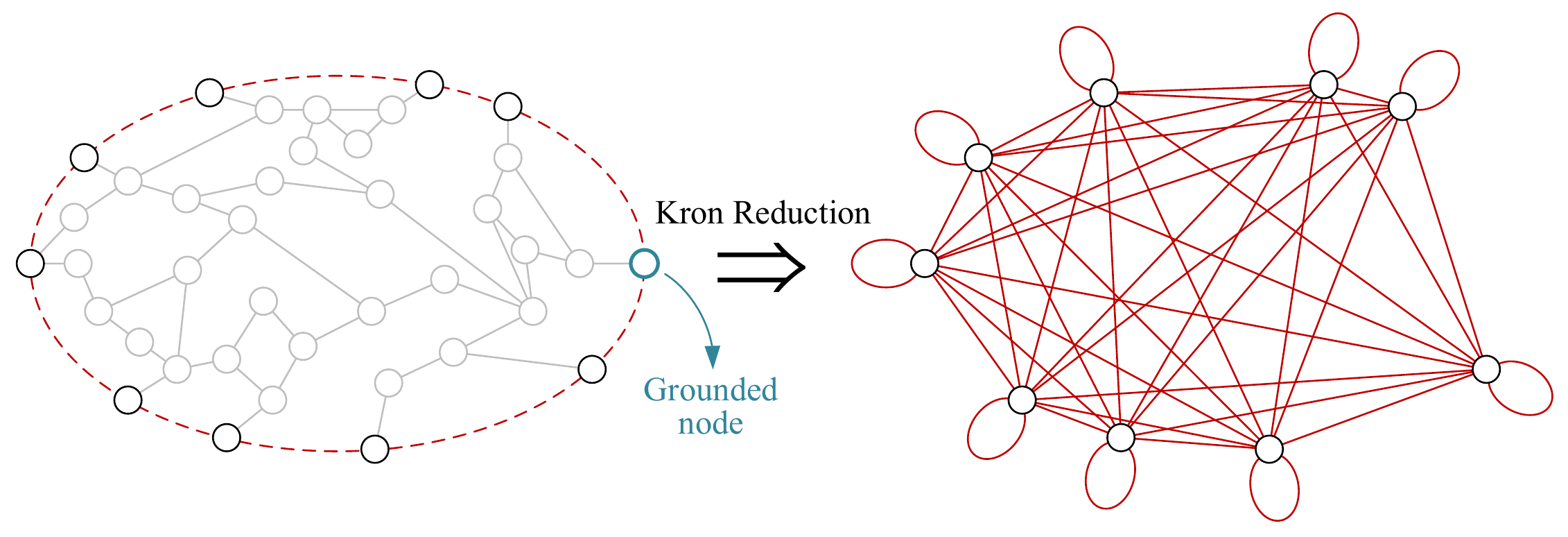}
	\vspace{-8mm}
	\caption{Kron reduction of the multi-converter system.}
	\vspace{-3mm}
	\label{Fig_Network_KR}
\end{figure}

It can be seen from the Kron-reduced network in Fig.\ref{Fig_Network_KR} that the converters interact with each other through the equivalent network, and the self-loops reflect the interactions between the converters and the infinite bus. Combining (\ref{eq:Qred}) and (\ref{eq:nodeij}) yields the network dynamics represented by the admittance matrix
\begin{equation}
\underbrace {\left[ {\begin{array}{*{20}{c}}
	{\Delta {I'_{d,1}}}\\
	{\Delta {I'_{q,1}}}\\
	\vdots \\
	{\Delta {I'_{d,m}}}\\
	{\Delta {I'_{q,m}}}
	\end{array}} \right]}_{\Delta {\bf{I'}}} = {Q_{{\rm{red}}}} \otimes {\bf{F}}(s)\underbrace
{\left[ {\begin{array}{*{20}{c}}
	{\Delta {U'_{d,1}}}\\
	{\Delta {U'_{q,1}}}\\
	\vdots \\
	{\Delta {U'_{d,m}}}\\
	{\Delta {U'_{q,m}}}
	\end{array}} \right]}_{\Delta {\bf{U'}}}		\label{eq:network-side}
\end{equation}
where ${\left[ {\begin{array}{*{20}{c}} {{\Delta I'_{d,i}}}&{{\Delta I'_{q,i}}} \end{array}} \right]^\top}$ is the current injection at node $i$ provided by the $i\rm{th}$ converter.

On the other hand, based on assumption (i), Eq.(\ref{eq:Y_C}) can be extended to represent the dynamics of all the converters as
\begin{equation}
\Delta {\bf{I'}} =  - {I_m} \otimes {{\bf{Y}}_{\bf{C}}}(s) \times \Delta {\bf{U'}}\,,	\label{eq:converter-side}
\end{equation}
where $I_m$ denotes the $m$-dimensional identity matrix.

Then, by combining the converter-side dynamics (i.e., (\ref{eq:converter-side})) and the network-side dynamics (i.e., (\ref{eq:network-side})) one obtains the closed-loop diagram of the multi-converter system that reflects the overall system dynamics, as depicted in Fig.\ref{Fig_Closed_loop}. Note that additive disturbances in the current injections can be conveniently considered in this closed-loop block diagram. Then, the open-loop transfer function matrix can be formulated by
\begin{equation}
{\bf T_{O}}(s) = {I_m} \otimes {\bf Y_C}(s) \times \left[ {Q_{{\rm{red}}}} \otimes {\bf{F}}(s) \right ]^{-1}\,.
\label{eq:openloopT}
\end{equation}
Moreover, the system is stable if and only if ${\rm det}|I_{2m}+{\bf T_{O}}(s)|=0$ is Hurwitz.

\begin{figure}[!t]
	\centering
	\includegraphics[width=1.6in]{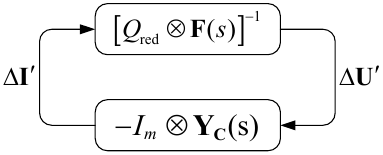}
	\vspace{-3mm}
	\caption{Closed-loop diagram of the multi-converter system.}
	\vspace{-3mm}
	\label{Fig_Closed_loop}
\end{figure}

\subsection{Decoupling of the Multi-Converter System}

The following proposition shows how the multi-converter system in Fig.\ref{Fig_Closed_loop} can be decoupled.

\begin{proposition}[Decoupling of multi-converter system]
\label{proposition: Decoupling of multi-converter system}
The $m$-converter system in \eqref{eq:openloopT} can be decoupled into $m$ subsystems, and the open-loop transfer function matrix of the $k{\rm th}$ subsystem is 
\begin{equation}
{\bf T}_{\bf O}^k(s) = {\bf Y_C}(s) \times \left[ \lambda_k \times {\bf{F}}(s) \right ]^{-1}\,,
\label{eq:openloopTi}
\end{equation}
where ${0 < {\lambda _1} \leq {\lambda _2} \leq ... \leq {\lambda _m}}$ are the eigenvalues of $Q_{{\rm{red}}}$.
Moreover, the system in \eqref{eq:openloopT} is stable if and only if ${\rm det}|I_{2}+{\bf T}_{\bf O}^k(s)|=0$ is Hurwitz for every $k \in \{1,...,m\}$.
\end{proposition}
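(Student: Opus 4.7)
The plan is to exploit the tensor-product structure of $T_O(s)$ by simultaneously diagonalizing the matrix factor acting on the spatial (network) index, while leaving the $2\times 2$ transfer-function block on the dq index alone. Since the grounded Laplacian $Q$ in \eqref{eq:Qmatrix} is real symmetric and positive definite (because of the grounding edges $B_{i,n+1}$), its Schur complement $Q_{\rm red}$ inherits symmetry and positive definiteness. Hence there exist an orthogonal matrix $U\in\mathbb{R}^{m\times m}$ and strictly positive eigenvalues $0<\lambda_1\le\dots\le\lambda_m$ with $Q_{\rm red}=U\,\Lambda\,U^{\top}$, $\Lambda={\rm diag}(\lambda_1,\dots,\lambda_m)$.

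Next I would apply the mixed-product and inverse identities for the Kronecker product to \eqref{eq:openloopT}. Because $\mathbf F(s)$ is $2\times 2$ and invertible (its determinant is a nonzero rational function of $s$), one has
\begin{equation*}
\bigl[Q_{\rm red}\otimes\mathbf F(s)\bigr]^{-1}=Q_{\rm red}^{-1}\otimes\mathbf F(s)^{-1},
\end{equation*}
and therefore
\begin{equation*}
\mathbf T_{\mathbf O}(s)=\bigl(I_m\otimes\mathbf Y_{\mathbf C}(s)\bigr)\bigl(Q_{\rm red}^{-1}\otimes\mathbf F(s)^{-1}\bigr)=Q_{\rm red}^{-1}\otimes\bigl(\mathbf Y_{\mathbf C}(s)\mathbf F(s)^{-1}\bigr).
\end{equation*}
Substituting the spectral decomposition gives
\begin{equation*}
\mathbf T_{\mathbf O}(s)=(U\otimes I_2)\bigl(\Lambda^{-1}\otimes\mathbf Y_{\mathbf C}(s)\mathbf F(s)^{-1}\bigr)(U^{\top}\otimes I_2),
\end{equation*}
which is an orthogonal similarity transform (since $(U\otimes I_2)^{-1}=U^{\top}\otimes I_2$) of a \emph{block-diagonal} operator whose $k$th block is $\lambda_k^{-1}\mathbf Y_{\mathbf C}(s)\mathbf F(s)^{-1}=\mathbf T_{\mathbf O}^k(s)$ as defined in \eqref{eq:openloopTi}.

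From here the decoupling is immediate. Similarity preserves determinants, so
\begin{equation*}
\det\bigl(I_{2m}+\mathbf T_{\mathbf O}(s)\bigr)=\det\bigl(I_{2m}+\Lambda^{-1}\otimes\mathbf Y_{\mathbf C}(s)\mathbf F(s)^{-1}\bigr)=\prod_{k=1}^{m}\det\bigl(I_2+\mathbf T_{\mathbf O}^k(s)\bigr),
\end{equation*}
where the last equality uses that the determinant of a block-diagonal matrix is the product of block determinants. The overall closed-loop characteristic polynomial therefore factorizes across the Laplacian modes, so the multi-converter loop is Hurwitz if and only if each modal loop $I_2+\mathbf T_{\mathbf O}^k(s)$ is Hurwitz, which is the claimed equivalence.

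The step I expect to require the most care is the justification that the factorization of $\det(I_{2m}+\mathbf T_{\mathbf O}(s))$ into the $m$ modal determinants is not just an algebraic identity of rational functions but actually characterizes closed-loop stability. In particular I must argue that no unstable pole–zero cancellation is introduced by the orthogonal transformation $U\otimes I_2$ (none is, because the transformation is static and invertible, so it is state-coordinate change on the interconnection and does not alter the set of closed-loop modes) and that $\mathbf F(s)^{-1}$ and $Q_{\rm red}^{-1}$ exist on the imaginary axis used in the Nyquist/Hurwitz test — the former by direct inspection of $\mathbf F(s)$ in \eqref{eq:nodeij}, the latter because $\lambda_k>0$ for all $k$ by positive definiteness of $Q_{\rm red}$. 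Everything else reduces to routine manipulations with Kronecker products and block-diagonal determinants.
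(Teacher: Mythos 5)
Your proposal is correct and follows essentially the same route as the paper: diagonalize $Q_{\rm red}$ and exploit the Kronecker mixed-product structure to split the loop into the $m$ modal subsystems, with the stability equivalence holding because the static, invertible modal transformation preserves the closed-loop poles. The only difference is presentational — you work at the level of the return-difference determinant $\det\bigl(I_{2m}+{\bf T_O}(s)\bigr)=\prod_{k=1}^{m}\det\bigl(I_2+{\bf T}_{\bf O}^k(s)\bigr)$ while the paper applies the transformation $(T^{-1}\otimes I_2)$ directly to the signal vectors $\Delta {\bf I'}$, $\Delta {\bf U'}$ — and your justification of diagonalizability and of $\lambda_k>0$ (symmetry and positive definiteness of $Q_{\rm red}$ as a Schur complement of the grounded Laplacian) is in fact more explicit than the paper's brief appeal to the presence of self-loops.
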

\begin{proof}
	
	Considering that $Q_{{\rm{red}}}$ contains self-loops (the original network is connected to the infinite bus), there exists an invertible matrix $T$ to diagonalize $Q_{{\rm{red}}}$ as
	\begin{equation}
		{T^{ - 1}}{Q_{{\rm{red}}}}T = \Lambda  = {\rm{diag}}\left\{ {{\lambda _1},{\lambda _2},...,{\lambda _m}} \right\}\,.	\label{eq:diag}
	\end{equation}

	Consider the following coordinate transformation
	\begin{equation}
		\begin{split}
			\Delta {\bf{I'_{T}}} &= ({T^{ - 1}} \otimes {I_2}) \times \Delta {\bf{I'}}\,,\\
			\Delta {\bf{U'_{T}}} &= ({T^{ - 1}} \otimes {I_2}) \times \Delta {\bf{U'}}\,,\\
		\end{split}		\label{eq:coordinate_transf}
	\end{equation}
	that makes (\ref{eq:network-side}) and (\ref{eq:converter-side}) become
	\begin{equation}
		\underbrace{\left[ {\begin{array}{*{20}{c}}
					{\Delta {I'_{Td,1}}}\\
					{\Delta {I'_{Tq,1}}}\\
					\vdots \\
					{\Delta {I'_{Td,m}}}\\
					{\Delta {I'_{Tq,m}}}
			\end{array}} \right]}_{\Delta {\bf{I'_{T}}}} = \Lambda \otimes {\bf{F}}(s)\underbrace
		{\left[ {\begin{array}{*{20}{c}}
					{\Delta {U'_{Td,1}}}\\
					{\Delta {U'_{Tq,1}}}\\
					\vdots \\
					{\Delta {U'_{Td,m}}}\\
					{\Delta {U'_{Tq,m}}}
			\end{array}} \right]}_{\Delta {\bf{U'_T}}}		\label{eq:network-side-T}
	\end{equation}
	\begin{equation}
		\Delta {\bf{I'_{T}}} =  - {I_m} \otimes {{\bf{Y}}_{\bf{C}}}(s) \times \Delta {\bf{U'_T}}\,,		\label{eq:converter-side-T}
	\end{equation}
	which concludes the proof.
\end{proof}
	
\begin{figure}[!t]
	\centering
	\includegraphics[width=2.6in]{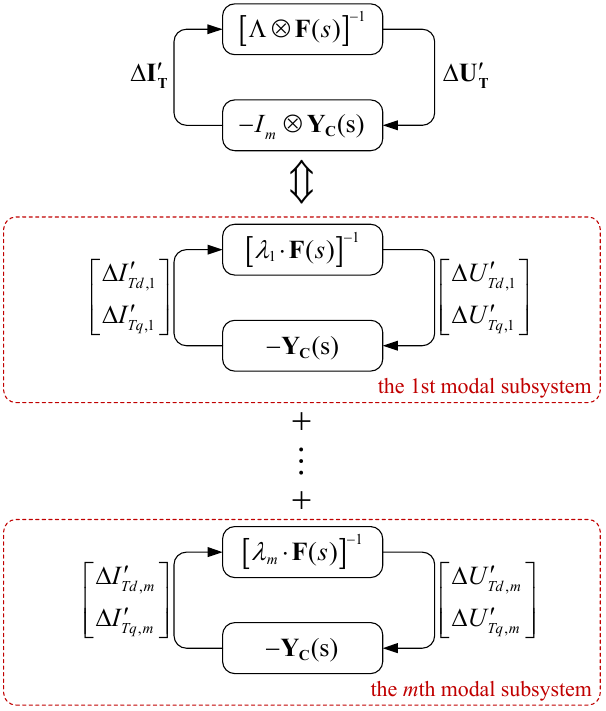}
	\vspace{-3mm}
	\caption{Decoupling of the multi-converter system {into modal subsystems}.}
	\vspace{-3mm}
	\label{Fig_m_subsystem}
\end{figure}

Accordingly, the closed-loop diagram of the multi-converter system (determined by (\ref{eq:network-side-T}) and (\ref{eq:converter-side-T})) is given in Fig.\ref{Fig_m_subsystem}. Note that the closed-loop diagram in Fig.\ref{Fig_m_subsystem} is equivalent to that in Fig.\ref{Fig_Closed_loop} (only with coordinate transformation applied to the inputs and outputs of the system). Hence, they represent the same system dynamics and share the same closed-loop poles.
Since $\Lambda$ and $I_m$ are both diagonal matrices in (\ref{eq:network-side-T}) and (\ref{eq:converter-side-T}), the coordinate transformation in (\ref{eq:coordinate_transf}) intrinsically decouple the multi-converter system into $m$ subsystems.

Note that such a decoupling analysis has become a popular method to study homogeneous network systems \cite{motter2013spontaneous,jiang2017performance,dong2018small}.
In the following, we will refer to these $m$ subsystems as {\em modal subsystems}. 
This decoupling removes the mutual edges of the Kron-reduced network in Fig.\ref{Fig_Network_KR}, and each modal subsystem contains one equivalent self-loop whose weight is $\lambda _k$.

\begin{remark}[Equivalent circuit]
The equivalent circuit of the $k {\rm th}$ modal subsystem is simply a single converter coupled to the infinite bus via the admittance $\lambda_{k} \cdot {\bf{F}}(s)$.
\end{remark}

\begin{remark}[Closed-loop poles of the modal subsystems]
The closed-loop poles of the multi-converter system in \eqref{eq:openloopT} can be obtained by solving ${\rm det}|I_{2m}+{\bf T_{O}}(s)|=0$, or equivalently by solving $\prod\limits_{k=1}^m {\rm det}|I_2+{\bf T}_{\bf O}^k(s)|=0$ (${\bf T}_{\bf O}^k(s)$ is defined in \eqref{eq:openloopTi}), that is, the closed-loop poles of the multi-converter system can be obtained by combining the closed-loop poles of all the modal subsystems.
\end{remark}

\section{PLL-Synchronization Stability Analysis of Modal Subsystems}
\label{Section: PLL-Synchronization Stability Analysis of Modal Subsystems}

The only difference of the decoupled modal subsystems is the admittance between the converter and the infinite bus i.e., $\lambda_{k} \cdot {\bf{F}}(s)$. In the following, we will show how this admittance affects the stability margin of the modal subsystems.

\subsection{Stability Margin Related to Grid Structure}

Fig.\ref{Fig_Sensitivity_F} shows the closed-loop diagram of the $k \rm{th}$ modal subsystem which illustrates how the PLL's output (i.e., $\Delta \theta$) responds to a phase perturbation from the infinite bus (i.e., $\Delta \theta _G$). Note that ${f_{\rm PLL}}(s)$ is defined in \eqref{eq:PLL} of Appendix \ref{Appenxix:admittance Matrix} and for the derivation of ${f_\delta^k }(s)$ we refer to \cite{huang2019grid}, wherein the detailed single-input-single-output model of a single grid-connected power converter was developed.
According to Fig.\ref{Fig_Sensitivity_F}, the open-loop transfer function of the $k{\rm{th}}$ modal subsystem is
\begin{equation}
L_k(s) = {f_\delta^k }(s) \times {f_{\rm PLL}}(s)\,,		\label{eq:L(s)}
\end{equation}
and the sensitivity function of the system is
\begin{equation}
S_k(s) = {1}/{\left[ 1 + L_k(s)\right] }.		\label{S(s)}
\end{equation}

\begin{figure}[!t]
	\centering
	\includegraphics[width=2.5in]{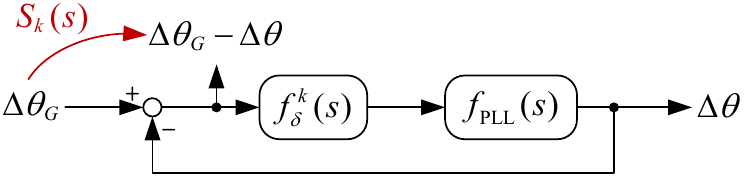}
	\vspace{-3mm}
	\caption{Closed-loop diagram of the $k {\rm th}$ modal subsystem.}
	\vspace{-3mm}
	\label{Fig_Sensitivity_F}
\end{figure}

Since $S_k(s)$ is the transfer function from $\Delta \theta _G$ to $\Delta \theta - \Delta \theta _G$, it represents how the angle disturbance from the infinite bus affects the closed-loop system. We note that \eqref{eq:openloopTi} and \eqref{eq:L(s)} represent the dynamics of the modal subsystems with different choices of inputs and outputs, hence the corresponding closed-loop systems share the same poles. In particular, the signals in the closed-loop diagram in Fig.\ref{Fig_Sensitivity_F} reflect the grid synchronization of the converter via a PLL.
Moreover, as a single-input-single-output system in Fig.\ref{Fig_Sensitivity_F}, Nyquist diagrams can be used to evaluate the system stability: the system is stable if and only if $L_k(s)$ satisfies the Nyquist criterion. Further, the stability margin can be represented by ${1/{\left| {S_k(s)} \right|}_\infty }$ because ${1/{\left| {S_k(s)} \right|}_\infty }$ is intrinsically the shortest distance between point $(-1,0)$ and $L_k(s)$, i.e., Nyquist distance \cite{skogestad2007multivariable}. Note that in our case, $L_k(s)$ is stable as it has no right-half plane pole.

Fig.\ref{Fig_S_lambda} (a) plots the Nyquist diagrams of $L_k(s)$ when different values of the $\lambda _k$ are used (with the system parameters given in Appendix \ref{Appendix: System parameters}), which shows that (regarding the typical parameter set in Appendix \ref{Appendix: System parameters}) the stability margin (i.e., ${1/{| {S_k(s)} |}_\infty }$) is monotonically increasing in $\lambda _k$. 
Further, Fig.\ref{Fig_S_lambda} (b) plots the relationship between $\lambda _k$ and the stability margin, which also shows that the stability margin is monotonically increasing in $\lambda _k$. In this case, the \emph {critical value} of $\lambda _k$ (denoted by $\lambda _C$) is 2.25 (i.e.,  PLL-synchronization instability arises in the  modal subsystem if $\lambda _k < 2.25$).

\begin{figure}[!t]
	\centering
	\includegraphics[width=3in]{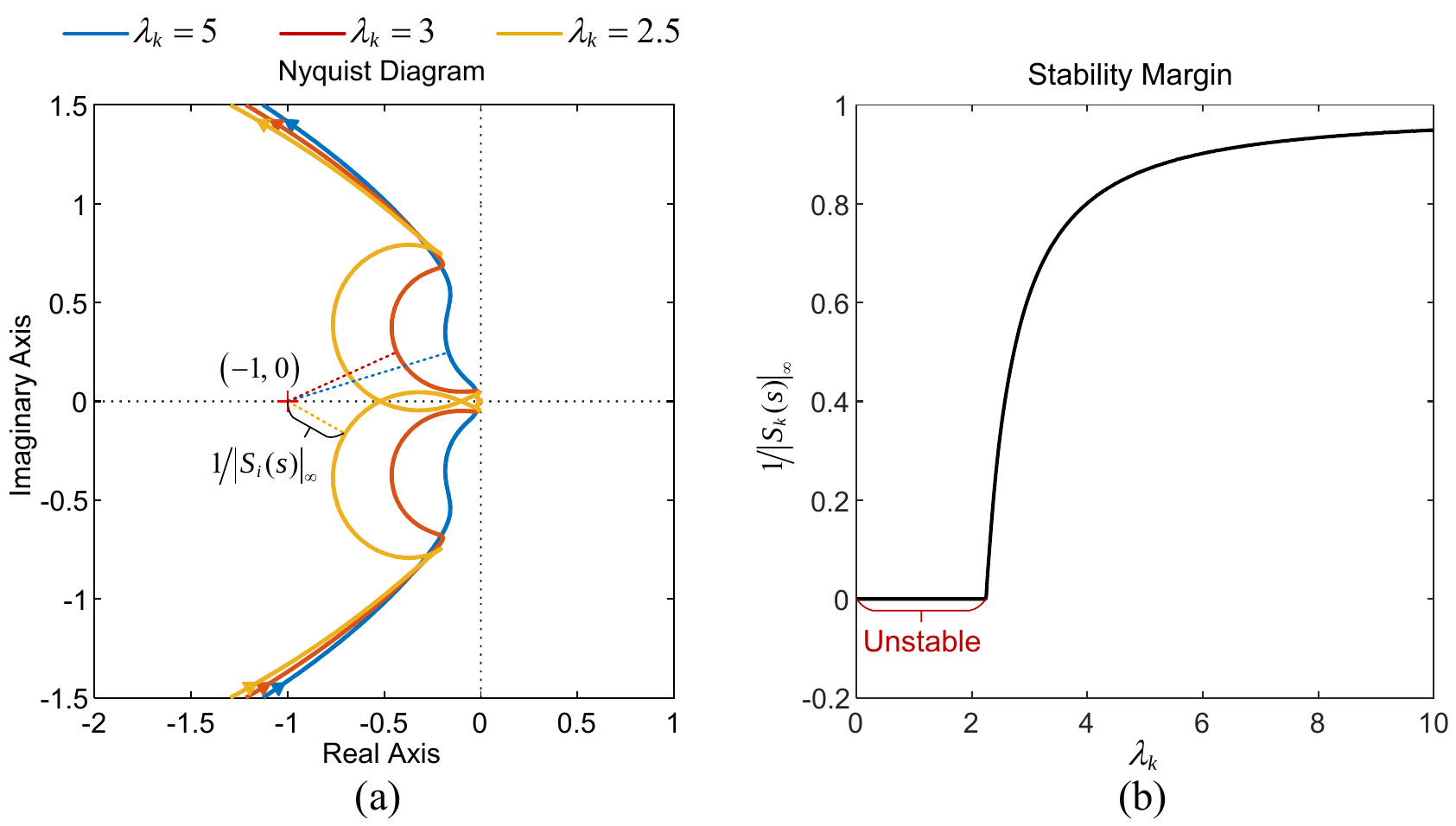}
	\vspace{-3mm}
	\caption{Impacts of ${\lambda _k}$ on stability (a) Nyquist diagram (b) Stability margin.}
	\vspace{-5mm}
	\label{Fig_S_lambda}
\end{figure}

\begin{remark}[Stability margin]
Consider the multi-converter system in \eqref{eq:openloopT} and the typical parameters in Appendix \ref{Appendix: System parameters}.
The PLL-synchronization stability margin of this system equals that of the $1 \rm{st}$ modal subsystem, which is ${1/{\left| {S_1(s)} \right|}_\infty}$. 
\end{remark}

\begin{remark}[Grid structure]
The impacts of different grid structures on the PLL-synchronization stability of the system in \eqref{eq:openloopT} can be interpreted as having different $\lambda_1$ in the $1 {\rm st}$ modal subsystem. If two different grid structure share the same $\lambda_1$, they result in the same stability margin for the system in \eqref{eq:openloopT}.
\end{remark}

This smallest eigenvalue of $Q_{{\rm{red}}}$ can be seen as the connectivity strength of the network \cite{pirani2014spectral, pirani2016smallest, dorfler2018electrical,giordano2016smallest}. For example, \cite{pirani2016smallest} states that for any subset of $k$ nodes of the network (not including the grounded node), $\lambda _1$ has an upper bound which equals to the sum of the edges connecting this subset and the rest of the network (including the grounded node) divided by $k$. Furthermore, $\lambda_{1}$ is a monotone function of the network connectivity, i.e., $\lambda_{1}$ can only increase when the network becomes denser \cite[Proposition 2]{giordano2016smallest}.
In \cite{dong2018small} and \cite{zhang2017generalized}, the smallest eigenvalue of the extended admittance matrix of a multi-infeed system is defined as generalized short circuit ratio to evaluate the power grid strength.

\subsection{Impacts of Converter Parameters}

The previous subsection demonstrates that for a typical set of converter parameters (given in Appendix \ref{Appendix: System parameters}), the PLL-synchronization stability margin is reduced with the decrease of {$\lambda _1$}. Moreover, the system will become unstable if $\lambda _1$ drops below the critical value $\lambda _C$ ($\lambda _C = 2.25$ with the parameters in Appendix \ref{Appendix: System parameters}). In the following, we further explore how the converter parameters affect the stability margin and $\lambda _C$, with a particular focus on the PLL bandwidth.

To begin with, we recall \cite{golestan2015conventional} and \cite{huang2018damping} 
that the relationship between the PLL bandwidth and the PI parameters as follows. By ignoring the coupling between PLL and the other parts of the system, the closed-loop transfer function in Fig.\ref{Fig_Sensitivity_F} can be approximated as
\begin{equation}
\frac{{\Delta \theta }}{{\Delta {\theta _G}}} \approx G_{\rm PLL}(s) = \frac{{s{K_{\rm PLLP}} + {K_{\rm PLLI}}}}{{{s^2} + s{K_{\rm PLLP}} + {K_{\rm PLLI}}}}\,,	\label{eq:PLL_tracking}
\end{equation}
which reflects the tracking capability of the PLL. As a second-order system, the damping ratio of $G_{\rm PLL}(s)$ can be calculated by $\zeta  = {K_{\rm PLLP}}/(2\sqrt {{K_{\rm PLLI}}} )$. By choosing $\zeta  = 1/\sqrt 2 $ for optimal performance, there holds
\begin{equation}
{K_{\rm PLLP}} = \sqrt {2{K_{\rm PLLI}}}\,.		\label{eq:KPLLP_KPLLI}
\end{equation}

With (\ref{eq:KPLLP_KPLLI}), the relationship between the PLL bandwidth $\omega _{\rm BW}$ and $K_{\rm PLLI}$ can be derived as
\begin{equation}
{K_{\rm PLLI}} = \omega _{\rm BW}^2/(2 + \sqrt 5 )\,,
\end{equation}
which meets
\begin{equation*}
{\rm{20}}\lg \left| {{G_{\rm PLL}}(j{\omega _{\rm BW}})} \right| =  - 3{\rm{dB}}\,.
\end{equation*}

Fig.\ref{Fig_S_BW_lambda} shows the stability margin with varying values of the PLL bandwidth and {$\lambda _1$}. Obviously, the stability margin will diminish with the increase of $\omega _{\rm BW}$ but with the decrease of {$\lambda _1$}, resulting in an unstable area. With $\omega _{\rm BW}$ changing from $50 \rm{rad/s}$ to $150 \rm{rad/s}$, $\lambda _C$ is increased from 2.25 to about 2.9, which indicates that the multi-converter system is more prone to instability. For example, if $\lambda _1 = 2.5$ for a certain network, the multi-converter system will be stable by setting $\omega _{\rm BW} = 50 \rm{rad/s}$, while it becomes unstable with $\omega _{\rm BW} = 150 \rm{rad/s}$.

\begin{figure}[!t]
	\centering
	\includegraphics[width=2.5in]{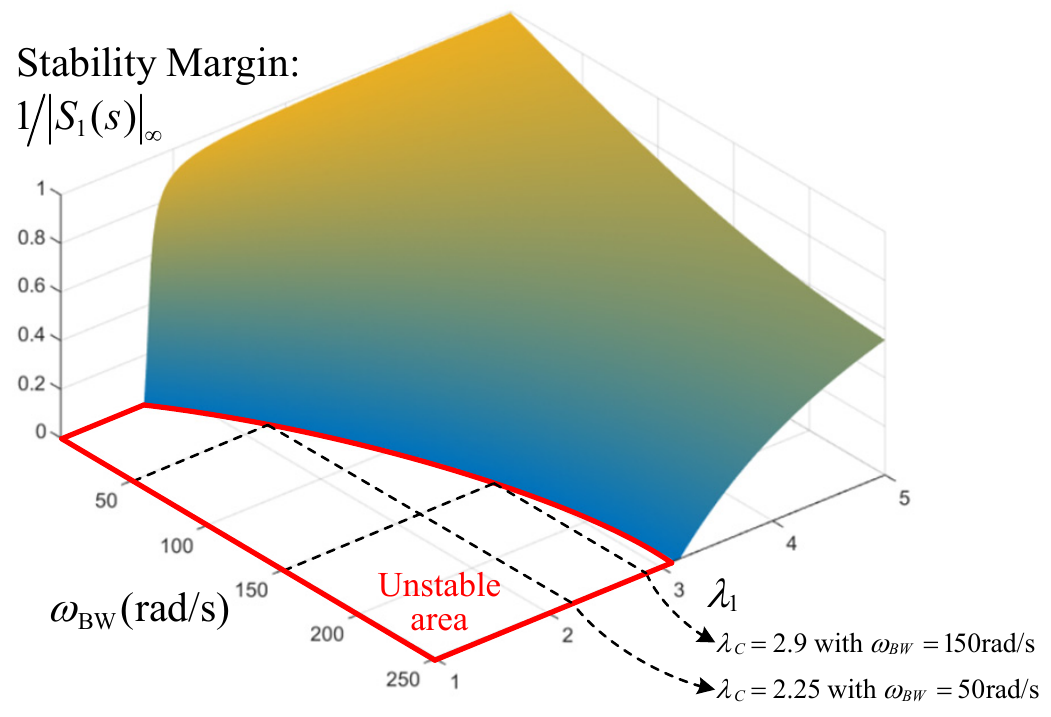}
	\vspace{-3mm}
	\caption{Stability margin affected by PLL bandwidth $\omega _{BW}$ and $\lambda _1$.}
	\vspace{-3mm}
	\label{Fig_S_BW_lambda}
\end{figure}

It is therefore concluded that a) $\lambda _C$ is determined by the converter dynamics while $\lambda _1$ reflects the network characteristic; b) the multi-converter system is stable if and only if $\lambda _C < \lambda _1$; c) increasing $\lambda _C$ (e.g., with higher PLL bandwidth) will make the multi-converter system more prone to {PLL-synchronization} instability; d) increasing $\lambda _1$ (by changing the network structure) ensures a larger stability margin.

As a constructive solution, the PLL can be retuned: notice from Fig.\ref{Fig_S_BW_lambda} that the PLL bandwidth should be accordingly reduced with the decrease of $\lambda_1$ to ensure the stability of multi-converter systems. However, a too-low PLL bandwidth may deteriorate the tracking performance and result in unacceptable overshot. Hence, it is also of great significance to make sure that the grid is strong enough (i.e., with sufficiently large $\lambda_{1}$), which can be done by appropriate grid planning.

\section{Impacts of Grid Structure on PLL-Synchronization Stability}

According to the above analysis, the sensitivity of the stability margin of the multi-converter system in \eqref{eq:openloopT} (denoted by $\mathscr S$) to a perturbation in $B_{ij}$ from the original network (before Kron reduction) can be calculated by
\begin{equation}
\frac{\partial {\mathscr S}}{\partial B_{ij}} = \frac{\partial \left( {1/{\left| {S_1(s)} \right|}_\infty}\right) }{\partial B_{ij}} = \frac{\partial \left( {1/{\left| {S_1(s)} \right|}_\infty}\right) }{\partial \lambda_1} \times \frac{\partial \lambda_1}{\partial B_{ij}}\,,
\end{equation}
where $\frac{\partial \left( {1/{\left| {S_1(s)} \right|}_\infty}\right) }{\partial \lambda_1}$ can be conveniently obtained from Fig.\ref{Fig_S_lambda} (b) or Fig.\ref{Fig_S_BW_lambda}.

In the following, we focus on how $B_{ij}$ affects $\lambda_1$ and thereby the system stability margin. Further, we investigate how to effectively enhance the network and improve the PLL-synchronization stability of the system.

\subsection{Sensitivity of Perturbations in Interior Network}
We now present how a perturbation between interior node $i$ and interior node $j$ in the original network (before Kron reduction) (i.e., $i,j \in \{m+1,...,n\}$) affects the smallest eigenvalue of $Q_{{\rm{red}}}$ ($\lambda_1$).

\begin{lemma}
\label{Lemma:sen_interior_nodes}
Consider a grounded network $Q$ and its Kron-reduced network $Q_{\rm red}$ as defined in \eqref{eq:Qmatrix}-\eqref{eq:subQ}. The sensitivity of $\lambda_1$ to a perturbation in $B_{ij}$ ($i,j \in \{m+1,...,n\}$ can be calculated by
\begin{equation}
\frac{{\partial {\lambda _1}}}{{\partial {B_{ij}}}} = v_1^\top \frac{{\partial {Q_{{\rm{red}}}}}}{{\partial {B_{ij}}}}{u_1}\,,
\label{eq:sen_lambda_k}
\end{equation}
where 
\begin{equation}
\frac{{\partial {Q_{{\rm{red}}}}}}{{\partial {B_{ij}}}} = {Q_{{\rm{ac}}}}\left( {e_{i-m}^{n - m} - e_{j-m}^{n - m}} \right){\left( {e_{i-m}^{n - m} - e_{j-m}^{n - m}} \right)^\top}Q_{{\rm{ac}}}^\top\,,
\label{eq:QredBij}
\end{equation}
$v_1$ is the left eigenvector which meets $v_1^\top {Q_{{\rm{red}}}} = v_1^\top {\lambda _1}$, $u_1$ is the right eigenvector which meets ${Q_{{\rm{red}}}}{u_1} = {\lambda _1}{u_1}$, $e_i^n$ denotes a $n \times 1$ vector with entry 1 at position $i$ and 0 at {all} other positions, and ${Q_{{\rm{ac}}}} =  - {Q_2}Q_4^{ - 1}$ is the accompanying matrix.
\end{lemma}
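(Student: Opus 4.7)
My plan is to combine two standard ingredients: (a) the first-order eigenvalue perturbation formula for a simple eigenvalue, and (b) an explicit differentiation of the Kron reduction expression $Q_{\rm red}=Q_1-Q_2Q_4^{-1}Q_3$ with respect to the line susceptance $B_{ij}$.

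First, I would assume (or note explicitly) that $\lambda_1$ is a simple eigenvalue of $Q_{\rm red}$, so that the usual perturbation identity applies. Differentiating the eigenvalue equation $Q_{\rm red}\,u_1=\lambda_1 u_1$ with respect to $B_{ij}$ and left-multiplying by $v_1^\top$ (with the normalization $v_1^\top u_1=1$) cancels the terms involving $\partial u_1/\partial B_{ij}$ and yields
\begin{equation*}
\frac{\partial \lambda_1}{\partial B_{ij}} \;=\; v_1^\top \frac{\partial Q_{\rm red}}{\partial B_{ij}}\, u_1,
\end{equation*}
which is exactly \eqref{eq:sen_lambda_k}. This is routine; the substantive work lies in computing $\partial Q_{\rm red}/\partial B_{ij}$ in the claimed rank-one form.

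Next, I would compute $\partial Q_{\rm red}/\partial B_{ij}$. Because both $i$ and $j$ are interior nodes ($i,j\in\{m+1,\dots,n\}$), the perturbation in $B_{ij}$ touches only the submatrix $Q_4$ and leaves $Q_1,Q_2,Q_3$ unchanged. From the Laplacian construction in \eqref{eq:Qmatrix}, a perturbation $\delta B_{ij}$ adds $\delta B_{ij}$ to $(Q_4)_{i-m,i-m}$ and $(Q_4)_{j-m,j-m}$ and subtracts $\delta B_{ij}$ from $(Q_4)_{i-m,j-m}$ and $(Q_4)_{j-m,i-m}$, which is precisely the rank-one symmetric update
\begin{equation*}
\frac{\partial Q_4}{\partial B_{ij}} \;=\; \bigl(e_{i-m}^{\,n-m}-e_{j-m}^{\,n-m}\bigr)\bigl(e_{i-m}^{\,n-m}-e_{j-m}^{\,n-m}\bigr)^\top.
\end{equation*}
Using the standard identity $\partial Q_4^{-1}/\partial B_{ij} = -Q_4^{-1}(\partial Q_4/\partial B_{ij})Q_4^{-1}$ together with the Kron formula gives
\begin{equation*}
\frac{\partial Q_{\rm red}}{\partial B_{ij}} \;=\; Q_2 Q_4^{-1}\, \frac{\partial Q_4}{\partial B_{ij}}\, Q_4^{-1} Q_3.
\end{equation*}

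Finally, I would invoke the symmetry of the electrical network: $Q$ is symmetric, so $Q_3=Q_2^\top$ and $Q_4^{-1}$ is symmetric. Then $Q_2Q_4^{-1}=-Q_{\rm ac}$ by definition, and $Q_4^{-1}Q_3=(Q_3^\top Q_4^{-\top})^\top=(Q_2Q_4^{-1})^\top=-Q_{\rm ac}^\top$. Substituting these into the previous display yields \eqref{eq:QredBij}, which, combined with the perturbation identity, gives the lemma. The only real obstacle is the technical caveat that the formula requires $\lambda_1$ to be simple (and real) so that $u_1,v_1$ are well-defined and differentiable; symmetry of $Q_{\rm red}$ makes this generic, and I would mention this assumption explicitly rather than attempt to handle eigenvalue crossings.
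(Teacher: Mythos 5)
Your proof is correct, and it reaches the same two-part structure as the paper (the first-order eigenvalue perturbation identity plus a rank-one expression for $\partial Q_{\rm red}/\partial B_{ij}$), but the key computation is done by a different route. The paper does not differentiate $Q_{\rm red}=Q_1-Q_2Q_4^{-1}Q_3$ directly; instead it invokes the known result from the Kron-reduction literature (\cite{dorfler2013kron}) that a finite susceptance change $B_{ij}\mapsto B_{ij}+\tau$ between interior nodes produces
\begin{equation*}
\tilde Q_{\rm red} = Q_{\rm red} + \frac{Q_{\rm ac}\bigl(e_{i-m}^{\,n-m}-e_{j-m}^{\,n-m}\bigr)\,\tau\,\bigl(e_{i-m}^{\,n-m}-e_{j-m}^{\,n-m}\bigr)^\top Q_{\rm ac}^\top}{1+\tau R_{\rm int}[i,j]}\,,
\end{equation*}
and then obtains \eqref{eq:QredBij} by letting $\tau\to 0$. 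Your alternative — perturbing only the $Q_4$ block by $\bigl(e_{i-m}^{\,n-m}-e_{j-m}^{\,n-m}\bigr)\bigl(e_{i-m}^{\,n-m}-e_{j-m}^{\,n-m}\bigr)^\top$, applying $\partial Q_4^{-1}/\partial B_{ij}=-Q_4^{-1}(\partial Q_4/\partial B_{ij})Q_4^{-1}$, and using $Q_3=Q_2^\top$ and symmetry of $Q_4^{-1}$ so the two minus signs in $Q_2Q_4^{-1}=-Q_{\rm ac}$ and $Q_4^{-1}Q_3=-Q_{\rm ac}^\top$ cancel — is sound and has the virtue of being self-contained: it needs only symmetry of $Q$ and invertibility of $Q_4$, not the external finite-perturbation formula. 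What the paper's route buys in exchange is the finite-$\tau$ expression itself, whose denominator $1+\tau R_{\rm int}[i,j]$ (with $R_{\rm int}$ the interior effective resistance) quantifies how large, non-infinitesimal line reinforcements saturate, information that is lost in the pure derivative. Your explicit caveat that $\lambda_1$ must be simple, with the normalization $v_1^\top u_1=1$, is a legitimate technical point that the paper leaves implicit (and which is benign here since $Q_{\rm red}$ is symmetric, so $v_1$ and $u_1$ coincide and the eigenvalue is real).
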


\begin{proof}
	Recall the eigenvalue sensitivity analysis of state matrix \cite{ma2006eigenvalue, kundur1994power} which leads to \eqref{eq:sen_lambda_k}.
	
	When the susceptance $B_{ij}$ is changed to $B_{ij} + \tau $, $\tau \in \mathbb{R}$, the new grounded Laplacian matrix of the power network becomes 
	\begin{equation}
	\tilde Q = Q + \tau  \times \left( {{e_i^n} - {e_j^n}} \right) \times {\left( {{e_i^n} - {e_j^n}} \right)^\top}\,.
	\end{equation}
	
	According to \cite{dorfler2013kron}, the grounded Laplacian matrix of the Kron-reduced network becomes 
	\begin{equation}
	{\tilde Q_{{\rm{red}}}} = {Q_{{\rm{red}}}} + \frac{{{Q_{{\rm{ac}}}}\left( {e_{i-m}^{n - m} - e_{j-m}^{n - m}} \right)\tau {{\left( {e_{i-m}^{n - m} - e_{j-m}^{n - m}} \right)}^\top}Q_{{\rm{ac}}}^\top}}{{1 + \tau {R_{{\mathop{\rm int}} }}\left[ {i,\;j} \right]}}\,,
	\end{equation}
	where
	\begin{equation*}
	{R_{{\mathop{\rm int}} }}\left[ {i,\;j} \right] \buildrel \Delta \over = {\left( {e_{i-m}^{n - m} - e_{j-m}^{n - m}} \right)^\top}Q_4^{ - 1}\left( {e_{i-m}^{n - m} - e_{j-m}^{n - m}} \right)
	\end{equation*}
	{is the effective resistance (in the interior network) between nodes $i$ and $j$.}
	
	Then, the sensitivity of $Q_{{\rm{red}}}$ to a perturbation in $B_{ij}$ is
	\begin{equation}
	\begin{split}
	\frac{{\partial {Q_{{\rm{red}}}}}}{{\partial {B_{ij}}}} &= \mathop {\lim }\limits_{\tau  \to 0} \frac{{{{\tilde Q}_{{\rm{red}}}} - {Q_{{\rm{red}}}}}}{\tau }\\
	&= {Q_{{\rm{ac}}}}\left( {e_{i-m}^{n - m} - e_{j-m}^{n - m}} \right){\left( {e_{i-m}^{n - m} - e_{j-m}^{n - m}} \right)^\top}Q_{{\rm{ac}}}^\top\,,	
	\label{eq:Qred_Bij}
	\end{split}
	\end{equation}
	where corresponds to the claim in \eqref{eq:QredBij}.
\end{proof}

We remark that the result in Lemma \ref{Lemma:sen_interior_nodes} can be extended to consider a perturbation between an interior node $i$ and the infinite bus (i.e., the self-loop of the interior node $i$), in which case the sensitivity of $Q_{{\rm{red}}}$ becomes
\begin{equation}
\begin{split}
\frac{{\partial {Q_{{\rm{red}}}}}}{{\partial {B_{ij}}}} &= \mathop {\lim }\limits_{\tau  \to 0} \frac{{{{\tilde Q}_{{\rm{red}}}} - {Q_{{\rm{red}}}}}}{\tau }\\
&= {Q_{{\rm{ac}}}}\left( {e_{i-m}^{n - m}} \right){\left( {e_{i-m}^{n - m}} \right)^\top}Q_{{\rm{ac}}}^\top\,.		\label{eq:Qred_Bij_IB}
\end{split}
\end{equation}
By combining \eqref{eq:Qred_Bij_IB} and \eqref{eq:sen_lambda_k} one obtains the sensitivity of $\lambda_1$ to such a perturbation.

\subsection{Sensitivity of Perturbations Among Converter Nodes}

In the following, we discuss the sensitivity of $\lambda_1$ to a perturbation between two converter nodes (or the self-loop of a converter node) in the original network.

\begin{lemma}
\label{Lemma:sen_converter_nodes}
Consider a grounded network $Q$ and its Kron-reduced network $Q_{\rm red}$ as defined in \eqref{eq:Qmatrix}-\eqref{eq:subQ}. The sensitivity of $\lambda_1$ to a perturbation in $B_{ij}$ ($i \in \{1,...,m\}$ and $j \in \{1,...,m\} \cup \{n+1\}$) can be calculated by
\begin{equation}
\frac{{\partial {\lambda _1}}}{{\partial {B_{ij}}}} = v_1^\top W'_{\{i,j\}}{u_1}\,,		\label{eq:lambda_Bij_W'}
\end{equation}
where $W'_{\{i,j\}} = \left( {e_i^m - e_j^m} \right) \times {\left( {e_i^m - e_j^m} \right)^\top}$ if $i,j \in \left\{{1,...,m} \right\}$, while $W'_{\{i,j\}} = e_i^m \times {\left( {e_i^m} \right)^\top}$ if $i \in \{1,...,m\}$ {and} $ j = n+1$.
\end{lemma}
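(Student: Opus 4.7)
The proof plan parallels the structure of Lemma \ref{Lemma:sen_interior_nodes} but is substantially simpler because the perturbation is confined to the block of the grounded Laplacian $Q$ that survives Kron reduction directly, without passing through the Schur complement with $Q_4^{-1}$. The key observation is that for $i,j \in \{1,\dots,m\}$ or for $i \in \{1,\dots,m\}$ and $j=n+1$, a perturbation in $B_{ij}$ affects only the block $Q_1$ of the partitioned matrix in \eqref{eq:subQ}; the blocks $Q_2, Q_3, Q_4$ remain unchanged. So the first step is to compute the perturbation $\partial Q_{\rm red}/\partial B_{ij}$ explicitly, and the second step is to apply the standard eigenvalue sensitivity formula already invoked in Lemma \ref{Lemma:sen_interior_nodes}.

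For the first case, $i,j \in \{1,\dots,m\}$, I perturb $B_{ij} \mapsto B_{ij}+\tau$ and write
\begin{equation*}
\tilde Q = Q + \tau\bigl(e_i^n - e_j^n\bigr)\bigl(e_i^n - e_j^n\bigr)^\top.
\end{equation*}
Because the indices $i,j$ both lie in $\{1,\dots,m\}$, the last $n-m$ entries of $e_i^n-e_j^n$ vanish, so the only block of $Q$ affected is $Q_1$, which becomes $\tilde Q_1 = Q_1 + \tau(e_i^m-e_j^m)(e_i^m-e_j^m)^\top$. Substituting into \eqref{eq:Qred} gives
\begin{equation*}
\tilde Q_{\rm red} = Q_{\rm red} + \tau\bigl(e_i^m-e_j^m\bigr)\bigl(e_i^m-e_j^m\bigr)^\top,
\end{equation*}
and taking $\tau \to 0$ yields $\partial Q_{\rm red}/\partial B_{ij} = W'_{\{i,j\}}$ as claimed. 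For the second case, where $j = n+1$ (the self-loop to the infinite bus at converter node $i$), the change $B_{i,n+1}\mapsto B_{i,n+1}+\tau$ only modifies the diagonal entry $Q_{ii}$ by $\tau$, so $\tilde Q_1 = Q_1 + \tau e_i^m(e_i^m)^\top$ while $Q_2,Q_3,Q_4$ are again untouched, and the same Schur-complement identity gives $\partial Q_{\rm red}/\partial B_{ij} = e_i^m(e_i^m)^\top = W'_{\{i,j\}}$.

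The final step is to plug this into the eigenvalue sensitivity identity already established in the proof of Lemma \ref{Lemma:sen_interior_nodes}, namely
\begin{equation*}
\frac{\partial \lambda_1}{\partial B_{ij}} = v_1^\top\frac{\partial Q_{\rm red}}{\partial B_{ij}}u_1,
\end{equation*}
valid whenever $\lambda_1$ is a simple eigenvalue with right eigenvector $u_1$ and left eigenvector $v_1$ normalized so that $v_1^\top u_1 = 1$. This immediately produces \eqref{eq:lambda_Bij_W'}.

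I expect no serious obstacle here: unlike Lemma \ref{Lemma:sen_interior_nodes}, which required tracking how the Schur complement $-Q_2 Q_4^{-1}Q_3$ responds to a rank-one perturbation buried inside $Q_4$ (and therefore had to invoke the effective-resistance identity from \cite{dorfler2013kron}), the present perturbations act directly on $Q_1$ and simply pass through to $Q_{\rm red}$. The only minor point worth a sentence is the standard regularity caveat on simplicity of $\lambda_1$; since $Q$ is a symmetric loopy Laplacian it is positive definite (the grounded node makes $Q_{\rm red}$ an irreducibly grounded Laplacian), and the Perron-Frobenius-type argument applied to $-Q_{\rm red}$ guarantees that its smallest eigenvalue is simple under mild connectivity conditions, which is implicit throughout Section \ref{Section: PLL-Synchronization Stability Analysis of Modal Subsystems}.
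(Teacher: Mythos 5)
Your proposal is correct and follows essentially the same route as the paper: perturb $Q$ by the rank-one (or diagonal) term, observe that the perturbation passes unchanged into $Q_{\rm red}$, and apply the eigenvalue sensitivity formula from Lemma \ref{Lemma:sen_interior_nodes}. The only difference is cosmetic -- you derive the fact that $\tilde Q_{\rm red} = Q_{\rm red} + \tau W'_{\{i,j\}}$ directly from the Schur-complement formula \eqref{eq:Qred} (noting that only the $Q_1$ block is touched), whereas the paper simply cites \cite{dorfler2013kron} for this step; your added remark on simplicity of $\lambda_1$ is a reasonable regularity caveat that the paper leaves implicit.
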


\begin{proof}
	Consider a change of the susceptance between converter node $i$ and converter node $j$ from $B_{ij}$ to $B_{ij} + \tau$, or a change of the susceptance between converter node $i$ and the infinite bus (i.e., node $j = n+1$) from $B_{ij}$ to $B_{ij} + \tau$, which makes the grounded Laplacian matrix of the network become
	\begin{equation}
	\tilde Q = Q + \tau  \times W_{\{i,j\}}\,.
	\end{equation}
	
	Then, according to \cite{dorfler2013kron}, the grounded Laplacian matrix of the Kron-reduced network can be expressed as
	\begin{equation}
	\tilde Q_{\rm{red}} = Q_{\rm{red}} + \tau  \times W'_{\{i,j\}}\,.
	\end{equation}
	
	Further, the sensitivity of $Q_{\rm{red}}$ to a perturbation in $B_{ij}$ can be calculated by
	\begin{equation}
	\frac{{\partial {Q_{{\rm{red}}}}}}{{\partial {B_{ij}}}} = \mathop {\lim }\limits_{\tau  \to 0} \frac{{{{\hat Q}_{{\rm{red}}}} - {Q_{{\rm{red}}}}}}{\tau } = W'_{\{i,j\}}\,.		\label{eq:W'}
	\end{equation}
	
	By combining (\ref{eq:sen_lambda_k}) and (\ref{eq:W'}), the sensitivity of $\lambda _1$ to a perturbation in $B_{ij}$ can be obtained as \eqref{eq:lambda_Bij_W'}.
\end{proof}

We remark that due to the simultaneous diagonalization for decoupling the system, every modal subsystem contains the dynamics of every converter, and the participation factor of the $i\rm{th}$ converter on the $1\rm{st}$ modal subsystem can be calculated by $p_{1i}=v_1^T [e_i^m (e_i^m)^T] {u_1}$ (the proof is similar to that of the participation factor analysis of state matrix \cite{kundur1994power}), which is surprisingly consistent with (\ref{eq:lambda_Bij_W'}) with $j = n+1$. That is, how the converters participate in the $1\rm{st}$ modal subsystems is determined by the grid structure, or to be more specific, by the sensitivity of $\lambda _1$ to perturbations on the self-loops of the converter nodes.

The participation factor $p_{1i}$ also provides insights into the placement of converter-interfaced generation, see e.g. \cite{poolla2017optimal,borsche2015effects}. For example, the converter with the largest participation factor on the $1\rm{st}$ modal subsystem can be moved to other sites that are closer to the converter with the smallest participation factor in order to increase the network connectivity.

\subsection{Case Studies of the 39-bus Test System}
Based on the previous analysis, this subsection provides case studies to illustrate how to improve the system PLL-synchronization stability by enhancing the grid structure. As demonstrated in the previous section, the smallest eigenvalues of $Q_{{\rm{red}}}$ (i.e., $\lambda _1$) determines the small-signal stability margin.  For the multi-converter system in Fig.\ref{Fig_Multi_Converter} (with the parameters given in Appendix \ref{Appendix: System parameters}), the eigenvalues of $Q_{\rm{red}}$ are given in Table \ref{lambda_Qred}. It can be deduced that the multi-converter system is stable because $\lambda _1 = 3.3118 > \lambda _C$; see Section \ref{Section: PLL-Synchronization Stability Analysis of Modal Subsystems}.

\renewcommand\arraystretch{1.4}
\begin{table}
	\centering
	\scriptsize
	\caption{Eigenvalues of $Q_{\rm{red}}$}
	\begin{tabular}{|lll|}
		\hline
		$\lambda _1 = 3.3118$		&	$\lambda _2 = 21.2484$		&	$\lambda _3 = 25.0226$				\\
		$\lambda _4 = 36.0841$		&	$\lambda _5 = 51.3565$		&	$\lambda _6 = 53.7490$				\\
		$\lambda _7 = 61.6484$		&	$\lambda _8 = 70.9915$		&	$\lambda _9 = 77.3948$				\\
		\hline
	\end{tabular}		\label{lambda_Qred}
\end{table}

Fig.\ref{Fig_Sensitivity_Matrix} shows the submatrix of the sensitivity matrix of $\lambda _1$ (denoted by $M$) {with elements} $M_{ij} = \partial \lambda _1 / \partial B_{ij}$, $i,j \in \{m+1,...,n\}$ (calculated by (\ref{eq:QredBij}) and (\ref{eq:sen_lambda_k})), and $M_{ij}$ is set to be $0$ if $i=j$. It can be seen from Fig.\ref{Fig_Sensitivity_Matrix} that $M_{ij} > 0$, which indicates that increasing the susceptance between two interior nodes always helps increase $\lambda _1$ and thus improves the stability of the system. However, the increase of susceptance usually comes at the cost of more investment, e.g., using double-circuit lines instead of single-circuit lines. Hence, it is significant to find the most ``sensitive'' edge that can effectively increase $\lambda _1$.

\begin{figure}[!t]
	\centering
	\includegraphics[width=2.1in]{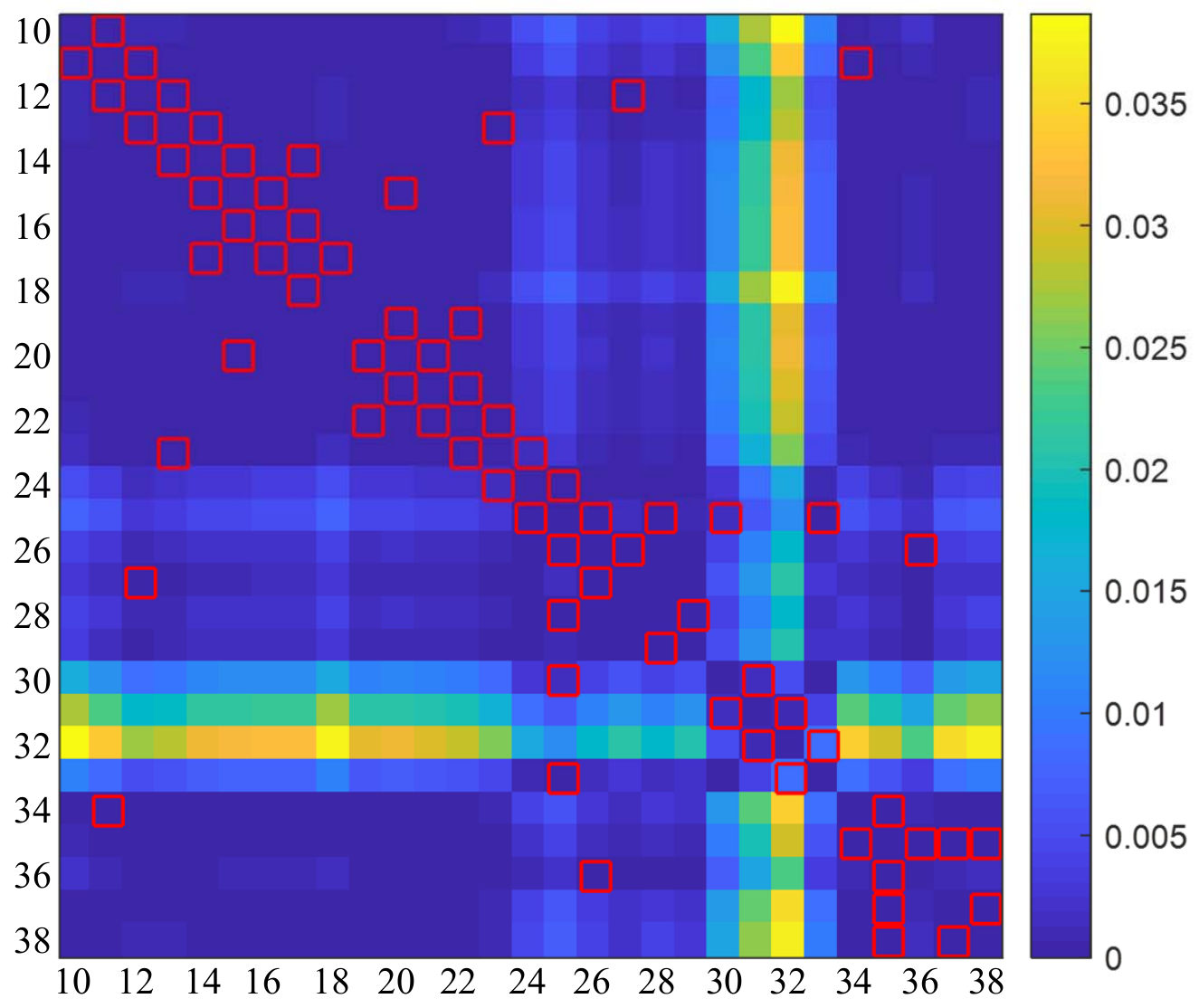}
	\vspace{-3mm}
	\caption{Sensitivity matrix of ${\lambda _1}$ to perturbations in the interior network ({\color{red}{$\square$}} denotes the existing edges in the network).}
	\vspace{-3mm}
	\label{Fig_Sensitivity_Matrix}
\end{figure}

Fig.\ref{Fig_Sensitivity_Matrix} indicates that the stability margin is most sensitive to perturbations affecting links to nodes 10, 18 and 30-33, which are the least connected nodes in Fig.~\ref{Fig_Multi_Converter} as well as the sole connection of the grid to the infinite bus. The maximal entry in Fig.\ref{Fig_Sensitivity_Matrix} is $M_{10,32} = 0.0387$, so increasing the susceptance between node 10 and node 32 has the most significant effect on the stability. On the other hand, as seen from Fig.\ref{Fig_Multi_Converter} that there doesn't exist an edge between node 10 and node 32, increasing the susceptance between these two nodes means building a new transmission line, whose feasibility needs to be further evaluated regarding the economics.

One alternative is to find the most ``sensitive'' edge that has existed in the network.  Fig.\ref{Fig_Sensitivity_Matrix} labels the existing edges with {\color{red}{$\square$}}, and the maximal entry among these edges is $M_{32,33} = 0.0087$, which indicates that increasing the susceptance between node 32 and node 33 can effectively improve the system stability, e.g., adding an additional transmission line. Of course, factors such as economics, reliability and geography should also be taken into account before modifying the grid structure.

Fig.\ref{Fig_Sen_Curves} plots $\lambda _1$ as functions of $B_{32,33}$ and $B_{17,18}$. It shows that with the increase of $B_{32,33}$ from 47.62 to 95.24, $\lambda _1$ is increased from 3.3118 to 3.6014. By comparison, $\lambda _1$ nearly remains the same (from 3.3118 to 3.3172) with the increase of $B_{17,18}$ from 45.91 to 91.82. These results are fully consistent with Fig.\ref{Fig_Sensitivity_Matrix} which illustrates that $\partial \lambda _1 / \partial B_{32,33}$ is much greater than $\partial \lambda _1 / \partial B_{17,18}$.

\begin{figure}[!t]
	\centering
	\includegraphics[width=3.5in]{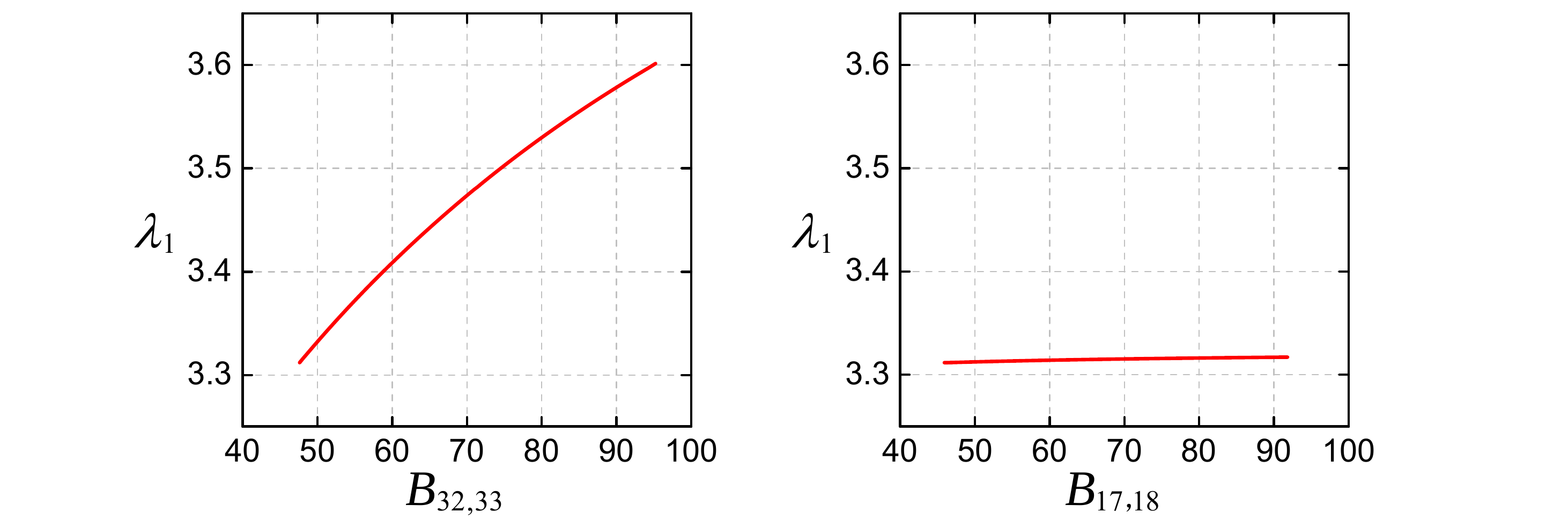}
	\vspace{-7mm}
	\caption{$\lambda _1$ as functions of $B_{32,33}$ and $B_{17,18}$.}
	\label{Fig_Sen_Curves}
\end{figure}

There is a single unique interior node (i.e., node 32) which is directly connected to the infinite bus in the studied network (see Fig.\ref{Fig_Multi_Converter}), and the sensitivity of $\lambda _1$ to a perturbation in $B_{32,39}$ is $\partial \lambda _1 / \partial B_{32,39} = 0.0257$ (calculated by (\ref{eq:Qred_Bij_IB}) and (\ref{eq:sen_lambda_k})). For further illustration, Fig.\ref{Fig_Sen_Curves_IB} plots $\lambda _1$ as a function of $B_{32,39}$, which shows that with the increase of $B_{32,39}$ from 61.27 to 122.54 (e.g., adding a new transmission line), $\lambda _1$ is increased from 3.3118 to 4.3311, thereby improving the PLL-synchronization stability. On the other hand, the stability can be deteriorated with the decrease of $B_{32,39}$. For example, $\lambda _1$ drops below 2.25 and the system becomes unstable when $B_{32,39}$ is less that 30.95, which can be caused by an outage of the transmission line between node 32 and the infinite bus.

\begin{figure}[!t]
	\centering
	\includegraphics[width=1.3in]{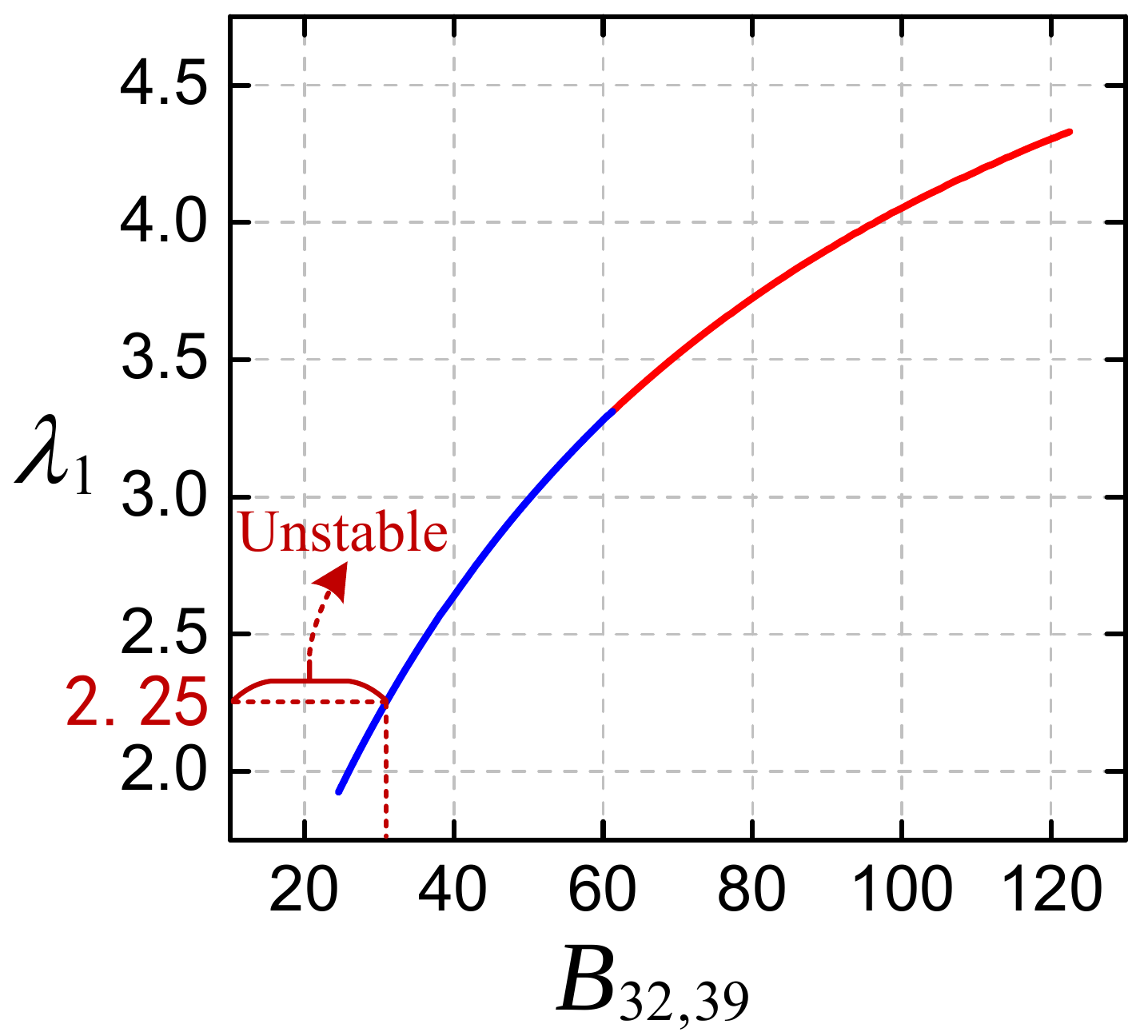}
	\vspace{-3mm}
	\caption{$\lambda _1$ as functions of $B_{32,39}$.}
	\vspace{-3mm}
	\label{Fig_Sen_Curves_IB}
\end{figure}

Table \ref{Participation_Factor} shows the sensitivity of $\lambda _1$ to perturbations on the self-loops of the converter nodes (i.e., the edges between the converter node and the infinite bus), which can be calculated by (\ref{eq:lambda_Bij_W'}). Also, it represents the participation factors of the converters in the $1\rm{st}$ modal subsystem, as discussed in the previous subsection. It can be seen that the $4\rm{th}$, $5\rm{th}$ and $6\rm{th}$ converters (which are closest to the infinite bus in Fig.~\ref{Fig_Multi_Converter}) have the lowest participation, and the remaining converter node have similar and significantly larger participation.

\renewcommand\arraystretch{1.8}
\begin{table}
	\centering
	\scriptsize
	\caption{Sensitivity of ${\lambda _1}$ to perturbations on self-loops.}
	\begin{tabular}{|lll|}
		\hline
		$\frac{\partial \lambda _1} {\partial B_{1,39}} = 0.1269$		&	$\frac{\partial \lambda _1} {\partial B_{2,39}} = 0.1270$		&	$\frac{\partial \lambda _1} {\partial B_{3,39}} = 0.1214$				\\
		$\frac{\partial \lambda _1} {\partial B_{4,39}} = 0.0908$		&	$\frac{\partial \lambda _1} {\partial B_{5,39}} = 0.0978$		&	$\frac{\partial \lambda _1} {\partial B_{6,39}} = 0.0387$				\\
		$\frac{\partial \lambda _1} {\partial B_{7,39}} = 0.1313$		&	$\frac{\partial \lambda _1} {\partial B_{8,39}} = 0.1329$		&	$\frac{\partial \lambda _1} {\partial B_{9,39}} = 0.1332$				\\
		\hline
	\end{tabular}		\label{Participation_Factor}
\end{table}

Fig.\ref{Fig_Sen_Curves_Self-loops} plots $\lambda _1$ as functions of $B_{1,39}$ and $B_{4,39}$, which demonstrates that increasing the susceptance between the converter nodes and the infinite bus has significant effects on improving the stability. $\lambda _1$ is increased from 3.3118 to 6.6073 when $B_{1,39}$ varies from 0 to 50, and it is increased from 3.3118 to 5.3073 when $B_{4,39}$ varies from 0 to 50. Note that in practice, the attainable susceptance between a converter node and the infinite bus is also related to the geographical distance. For example, $B_{1,39} = 50$ may not be attainable if the converter is quite distant from the infinite bus.

\begin{figure}[!t]
	\scriptsize
	\centering
	\includegraphics[width=3.5in]{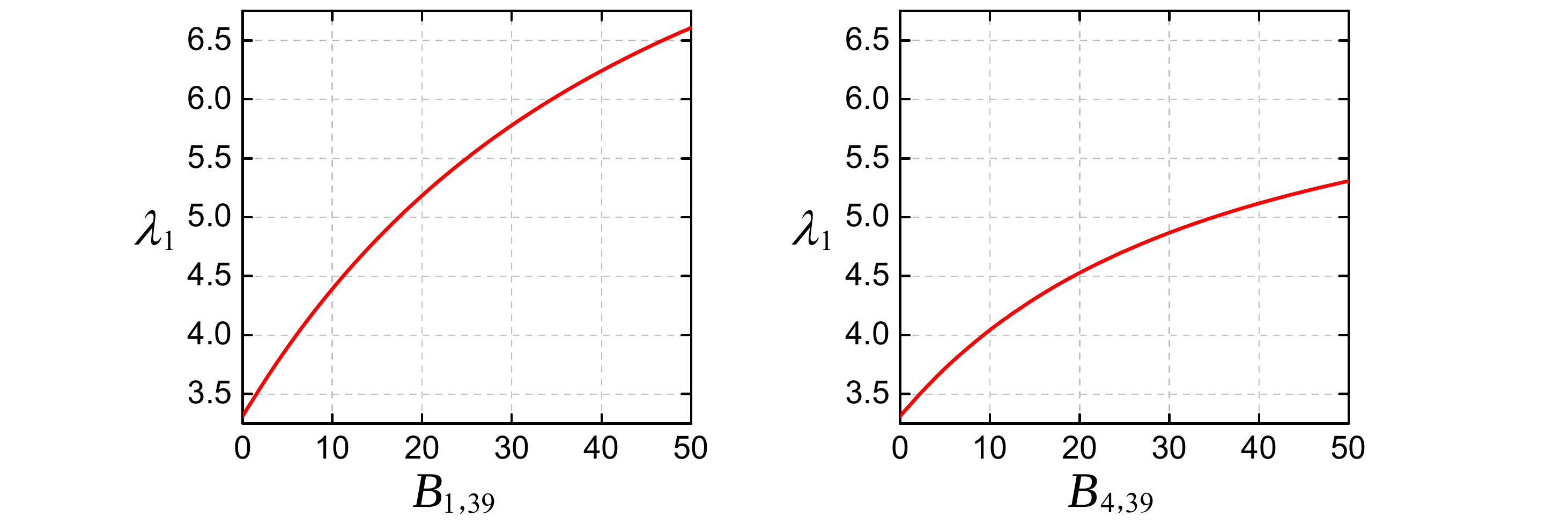}
	\vspace{-3mm}
	\caption{$\lambda _1$ as functions of $B_{1,39}$ and $B_{4,39}$.}
	\label{Fig_Sen_Curves_Self-loops}
\end{figure}

Another convenient approach to improve the stability is to change the placement of the converters \cite{poolla2017optimal,borsche2015effects}. As suggested in the previous subsection, the converter with the largest participation factor (i.e., Converter~9 as shown in Table~\ref{Participation_Factor}) can be moved to the sites that are closer to the converter with the smallest participation factor (i.e., Converter~6 as shown in Table~\ref{Participation_Factor}). Here we choose to move Converter~9 to Node~30, which makes $\lambda_1$ change from 3.3118 to 3.5514 and thus the PLL-synchronization stability is improved.

\begin{figure}[!t]
	\centering
	\includegraphics[width=1.5in]{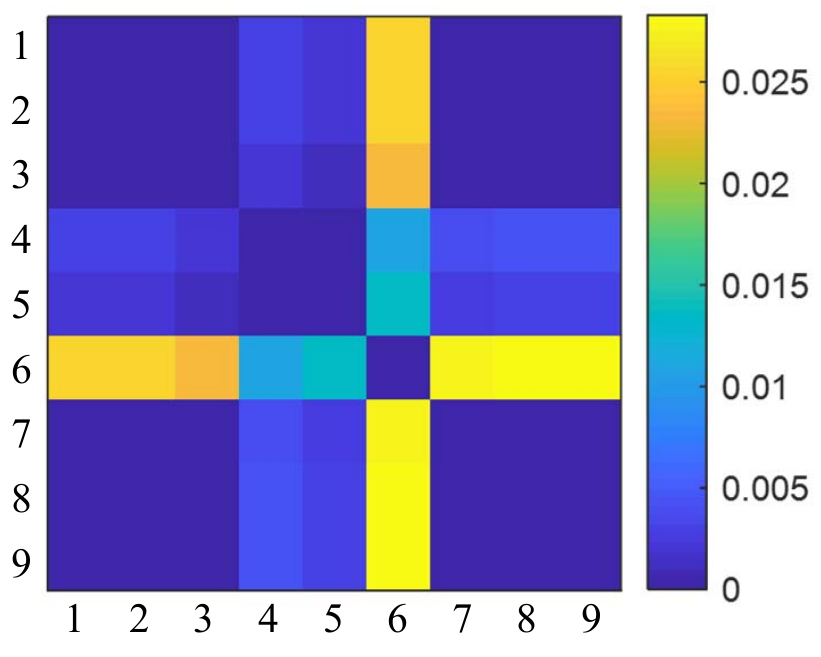}
	\vspace{-3mm}
	\caption{Sensitivity matrix of ${\lambda _1}$ to perturbations among the converter nodes.}
	\label{Fig_Sen_matrix_converter}
\end{figure}

\begin{figure}[!t]
	\centering
	\includegraphics[width=3.5in]{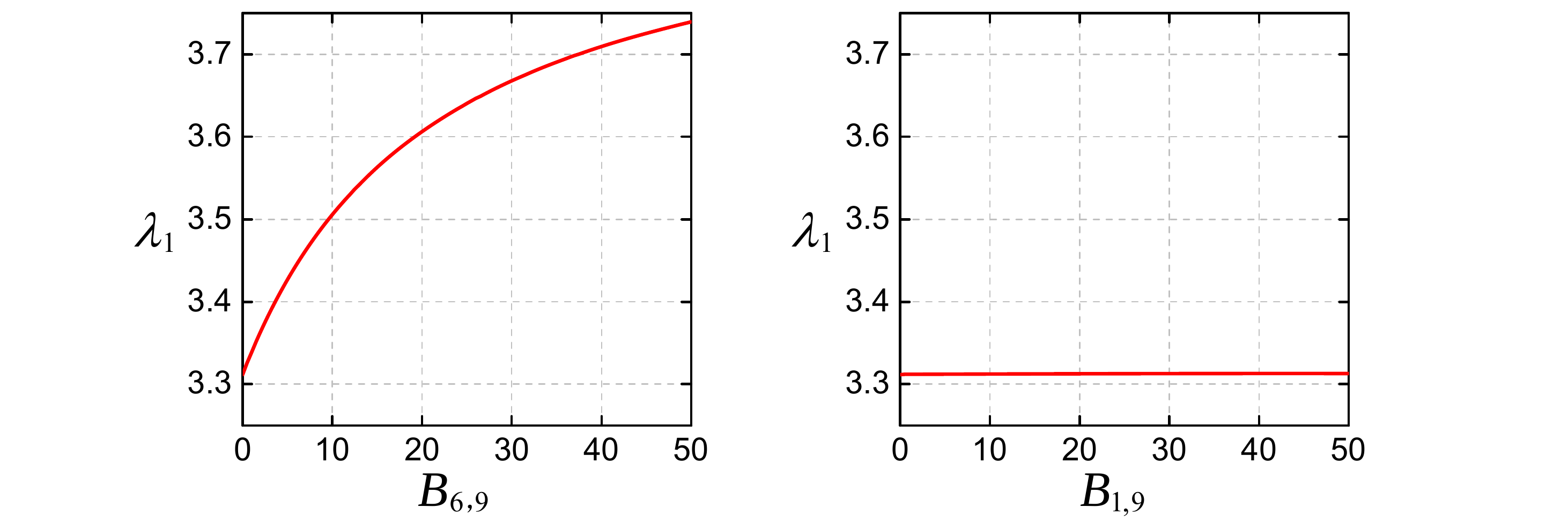}
	\vspace{-5mm}
	\caption{$\lambda _1$ as functions of $B_{6,9}$ and $B_{1,9}$.}
	\vspace{-5mm}
	\label{Fig_Sen_Curves_Converter_nodes}
\end{figure}

To illustrate the effects of perturbations among the converter nodes, Fig.\ref{Fig_Sen_matrix_converter} displays the submatrix of the sensitivity matrix of $\lambda _1$ with $M_{ij} = \partial \lambda _1 / \partial B_{ij}$, $i,j \in \{1,...,m\}$ (calculated by (\ref{eq:lambda_Bij_W'})), and $M_{ij}$ is set to be $0$ if $i=j$. It can be seen that {the most sensitive connection is to node 6, which is the least connected node in Fig.~\ref{Fig_Multi_Converter}. The} maximal entry is $M_{6,9} = 0.0283$, which indicates that adding a new line between node 6 and node 9 has the most significant effect on improving the stability. Besides, adding lines between node 6 and node 7, or between node 6 and node 8 can also effectively improve the stability.

Further, Fig.\ref{Fig_Sen_Curves_Converter_nodes} plots $\lambda _1$ as functions of $B_{6,9}$ and $B_{1,9}$. With the increase of $B_{6,9}$ from 0 to 50, $\lambda _1$ is increased from 3.3118 to 3.7393, which improves the system stability. By comparison, $\lambda _1$ almost remains the same with the increase of $B_{1,9}$ from 0 to 50. The above results are in accordance with Fig.\ref{Fig_Sen_matrix_converter} which shows that $M_{6,9}$ is much greater that $M_{1,9}$.

\section{Simulation Results}

To illustrate the effectiveness of our linearization and sensitivity-based analysis, we now provide a detailed simulation study based on a {\em nonlinear} model of the multi-converter system in Fig.\ref{Fig_Multi_Converter}, with parameters given in Appendix \ref{Appendix: System parameters}.

Fig.\ref{Fig_Case_B3239} displays the time-domain responses of the multi-converter system to show how the changes of grid structure affect the {PLL-synchronization} stability and performance of the system. In Fig.\ref{Fig_Case_B3239} (a), $B_{32,39}$ is decreased from 61.27 to 50 at $t=0.1\rm{s}$, and in Fig.\ref{Fig_Case_B3239} (b), $B_{32,39}$ is decreased from 61.27 to 40 at $t=0.1\rm{s}$. {Due to the nonlinearity of underlying model, the changes result in a transient deviation before relaxing to a (possibly new) equilibrium point.} It can be seen that the active power responses of the nine converters have higher damping ratio in Fig.\ref{Fig_Case_B3239} (a) than those in Fig.\ref{Fig_Case_B3239} (b), consistent with the results in Fig.\ref{Fig_Sen_Curves_IB} that $\lambda _1$ is decreased with the decrease of $B_{32,39}$, thereby deteriorating the system stability. Note that higher damping ratio indicates a larger stability margin. 

Fig.\ref{Fig_Case_B3239} (c) plots the responses when $B_{32,39}$ is decreased from 61.27 to 30 at $t=0.1\rm{s}$, and the system is {linearly} unstable {(resulting in sustained oscillations for the nonlinear system) since} the active power outputs of the nine converters are oscillating and cannot converge. It is consistent with Fig.\ref{Fig_Sen_Curves_IB} that the system will become {linearly} unstable if $B_{32,39}$ is less than 30.95, which causes $\lambda _1 < 2.25$.

\begin{figure}[!t]
	\centering
	\includegraphics[width=2.4in]{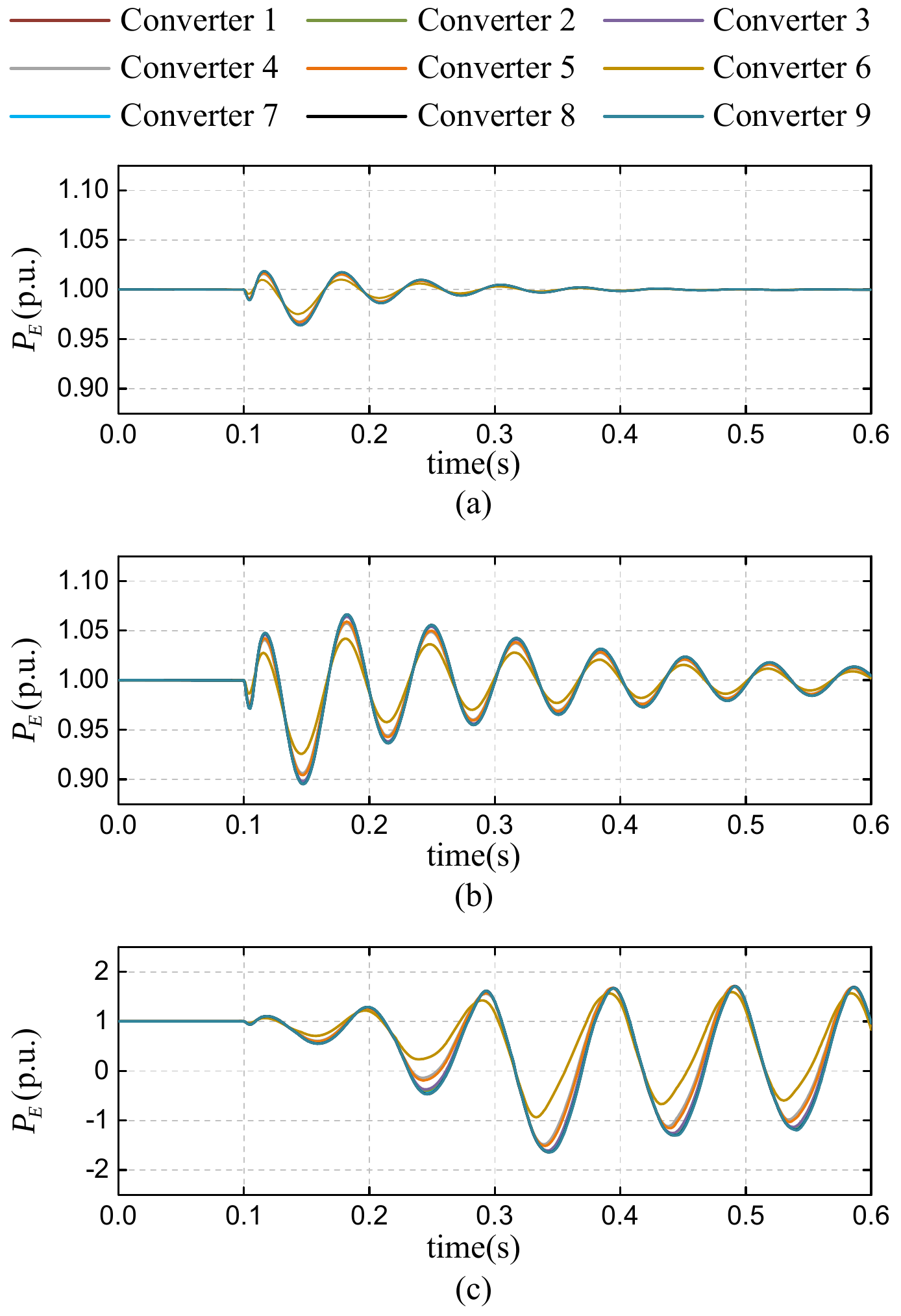}
	\vspace{-3mm}
	\caption{Time-domain responses of the multi-converter system (a) $B_{32,39}$ is decreased from 61.27 to 50 (b)  $B_{32,39}$ is decreased from 61.27 to 40 (c) $B_{32,39}$ is decreased from 61.27 to 30 (an unstable case).}
	\vspace{-3mm}
	\label{Fig_Case_B3239}
\end{figure}

To illustrate how perturbations of the existing lines affect the {PLL-synchronization} stability, Fig.\ref{Fig_Case_B3233} shows the active power responses of the nine converters when $B_{32,33}$ and $B_{17,18}$ are perturbed. In Fig.\ref{Fig_Case_B3233} (a), $B_{32,33}$ is increased from 47.62 to 95.24 at $t=0.1\rm{s}$, and in Fig.\ref{Fig_Case_B3233} (b), $B_{32,33}$ is decreased from 47.62 to 22.955 at $t=0.1\rm{s}$. It can be seen that increasing $B_{32,33}$ can increase the damping ratio and improve the stability. Fig.\ref{Fig_Case_B3233} (c) and (d) plot the responses when $B_{17,18}$ is changed, and the system has little response in these two cases because $\lambda _1$ is not sensitive to the perturbations on $B_{17,18}$, as demonstrated in Fig.\ref{Fig_Sensitivity_Matrix}. These simulation results are consistent with the results in Fig.\ref{Fig_Sen_Curves} as well. 

\begin{figure}[!t]
	\centering
	\includegraphics[width=2.6in]{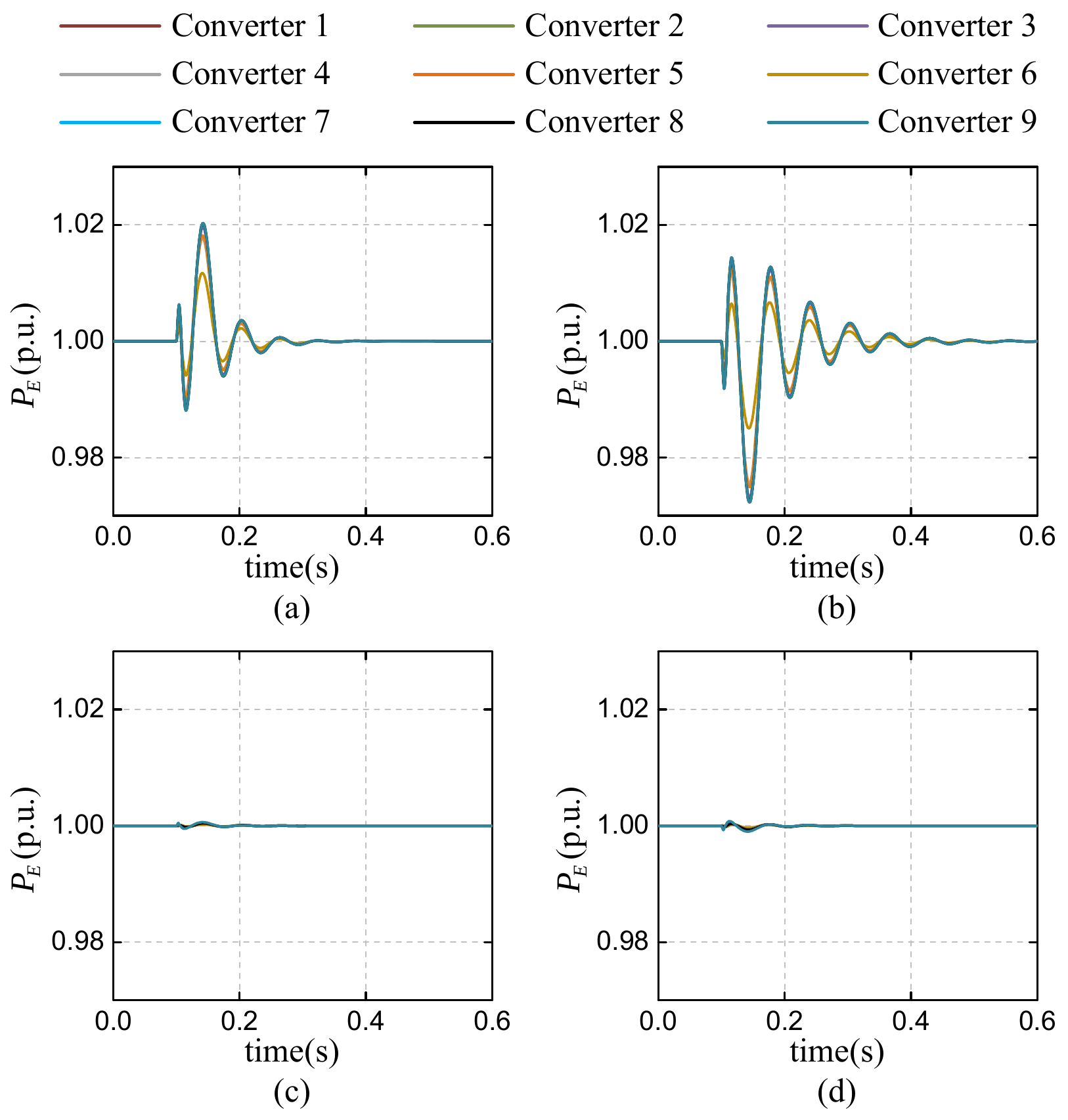}
	\vspace{-3mm}
	\caption{Time-domain responses of the system (a) $B_{32,33}$ is increased from 47.62 to 95.24 (b)  $B_{32,33}$ is decreased from 47.62 to 23.82 (c) $B_{17,18}$ is increased from 45.91 to 91.82 (d) $B_{17,18}$ is decreased from 45.91 to 22.955.}
	\vspace{-3mm}
	\label{Fig_Case_B3233}
\end{figure}

\section{Conclusions}
This paper investigated the impacts of grid structure on PLL-synchronization stability of multi-converter systems. The stability analysis of a single-converter infinite-bus system demonstrated that the stability margin of PLL-based converter is strongly related to the grid-side admittance. We explicitly showed how the dynamics of the converters in a multi-converter system are coupled via the power network, and how they can be decoupled into several modal subsystems when the converters have homogeneous dynamics. On this basis, we revealed that the smallest eigenvalue of the grounded Laplacian matrix of the Kron-reduced power network dominates the PLL-synchronization stability margin of the whole multi-converter system. Moreover, we showed that the PLL bandwidth should be reduced with the decrease of this smallest eigenvalue in order to ensure the stability of multi-converter systems.
Based on these insights, we revealed the effect of the grid structure on the system stability through the sensitivity of the smallest eigenvalue with respect to network perturbations.
Further, we provided guidelines on how to improve the system stability by proper placement of the converters and enhancing some weak connections, which is particularly useful for the grid planning. Our results confirm the prevailing intuition, that PLL-based power converters are stable only in a strong grid.

Future work will focus on how to optimize the grid structure to enhance the PLL-synchronization stability by further considering economic and geographic factors.

\bibliographystyle{IEEEtran}

\bibliography{ref}

%

\appendices
\vspace{-2mm}
\section{System parameters}
\label{Appendix: System parameters}

\renewcommand{\thetable}{\thesection.\arabic{table}}

\setcounter{table}{0}

\vspace{-4mm}

\renewcommand\arraystretch{1.4}
\begin{table}[h]
	\scriptsize
	\centering
	\caption{Parameters of the Multi-Converter System}
	\begin{tabular}{|lll|}
		\hline
		\multicolumn{3}{|c|}{Base Values for Per-unit Calculation}										\\
		\hline
		$U_{\rm{base}} = 380\rm{V}$	&	$S_{\rm{base}} = 50\rm{kVA}$	&	$f_{\rm{base}} = 50\rm{Hz}$	\\
		\hline
		\multicolumn{3}{|c|}{Parameters of the Power Converters (p.u.)}									\\
		\hline
		$L_F = 0.05$				&	$C_F = 0.05$					&	$L_g = 0.05$				\\
		$K_{\rm VF} = 1$				&	$T_{\rm VF} = 0.01$					&	$K_{\rm CCP} = 0.3$				\\
		$K_{\rm CCI} = 10$				&	$K_{\rm PCP} = 0.5$					&	$K_{\rm PCI} = 40$				\\
		$K_{\rm QCP} = 0.5$				&	$K_{\rm QCI} = 40$					&	$K_{\rm PLLP} = 34.36$			\\
		$K_{\rm PLLI} = 590.17$			&	$P^{\rm ref} = 1$					&	$Q^{\rm ref} = 0$				\\
		\hline
		\multicolumn{3}{|c|}{Susceptance of the Electrical Network (p.u.)}								\\
		\hline
		$B_{1,11} = 92.08$			&	$B_{2,15} = 66.67$				&	$B_{3,19} = 83.33$			\\
		$B_{4,28} = 117.37$			&	$B_{5,29} = 92.59$				&	$B_{6,31} = 116.55$			\\
		$B_{7,34} = 71.84$			&	$B_{8,38} = 106.84$				&	$B_{9,18} = 66.67$			\\
		$B_{10,11} = 40.55$			&	$B_{11,12} = 110.38$			&	$B_{12,13} = 78.25$			\\
		$B_{13,14} = 130.21$		&	$B_{14,15} = 641.03$			&	$B_{15,16} = 181.16$		\\
		$B_{16,17} = 362.32$		&	$B_{17,18} = 45.91$				&	$B_{15,20} = 203.25$		\\
		$B_{19,20} = 387.60$		&	$B_{20,21} = 38.31$				&	$B_{19,22} = 387.60$		\\
		$B_{21,22} = 38.31$			&	$B_{13,23} = 129.20$			&	$B_{22,23} = 165.02$		\\
		$B_{23,24} = 76.80$			&	$B_{24,25} = 177.31$			&	$B_{25,26} = 187.27$		\\
		$B_{12,27} = 125.31$		&	$B_{26,27} = 203.25$			&	$B_{25,28} = 85.47$			\\
		$B_{28,29} = 120.77$		&	$B_{25,30} = 123.46$			&	$B_{30,31} = 119.05$		\\
		$B_{31,32} = 173.61$		&	$B_{25,33} = 282.49$			&	$B_{32,33} = 47.62$			\\
		$B_{11,34} = 193.80$		&	$B_{34,35} = 52.60$				&	$B_{26,36} = 96.34$			\\
		$B_{35,36} = 113.38$		&	$B_{35,37} = 35.16$				&	$B_{35,38} = 26.67$			\\
		$B_{37,38} = 110.38$		&	$B_{32,39} = 61.27$				&	$B_{9,10} = 66.67$			\\
		$B_{14,17} = 148.81$		&									&								\\
		\hline
	\end{tabular}	
	\vspace{-4mm}	
	\label{network_parameter}
\end{table}

\section{admittance Matrix of PLL-based Converter}
\label{Appenxix:admittance Matrix}

\renewcommand{\theequation}{\thesection.\arabic{equation}}

\setcounter{equation}{0}
In the following, we present the derivation of the admittance matrix of the PLL-based converter in Fig.\ref{Fig_Single_Converter}.
When modeled in the controller's rotating \emph{dq}-frame (whose angular frequency is determined by the PLL), the dynamic equations of the \emph{LCL} filter can be expressed as \cite{harnefors2007modeling}
\begin{equation}
{\vec U^*} - \vec V = \left( {s{L_F} + j\omega {L_F}} \right) \times \vec I_C\,,		\label{eq:Lf}
\end{equation}
\begin{equation}
\vec I_C - {\vec I} = \left( {s{C_F} + j\omega {C_F}} \right) \times \vec V  \buildrel \Delta \over = {\vec Y_{\rm CL}}(s) \times \vec V\,, 	\label{eq:Cf}
\end{equation}
\begin{equation}
\vec V - \vec U = \left( {s{L_g} + j\omega {L_g}} \right) \times {\vec I}  \buildrel \Delta \over = {\vec Z_g}(s) \times {\vec I} 	\label{eq:Lg}{\,,}
\end{equation}
where ${\vec U^*} = U_d^* + jU_q^*$, $\vec V = {V_d} + j{V_q}$, $\vec U = {U_d} + j{U_q}$, $\vec I_C = {I_{Cd}} + j{I_{Cq}}$ and ${\vec I} = {I_{d}} + j{I_{q}}$ are the corresponding vectors of ${\bf{U}}_{{\bf{abc}}}^{\bf{*}}$, ${{\bf{V}}_{{\bf{abc}}}}$, ${{\bf{U}}_{{\bf{abc}}}}$, ${{\bf{I}}_{{\bf{Cabc}}}}$ and ${{\bf{I}}_{{\bf{abc}}}}$ in the controller's \emph{dq}-frame, respectively. ${L_F}$ is the converter-side inductance, ${L_g}$ is the grid-side inductance, and ${C_F}$ is the \emph{LCL}'s capacitance. 

The dynamic equation of the current control loop is
\begin{equation}
{\vec U^*} = PI_{\rm CC}(s) \times \left( {{{\vec I}^{ref}} - \vec I_C} \right ) + j\omega {L_F}\vec I_C + {f_{\rm VF}}(s)\vec V\,,	\label{eq:current_control}
\end{equation}
where $P{I_{\rm CC}}(s) = {K_{\rm CCP}} + {{{K_{\rm CCI}}} \mathord{\left/ {\vphantom {{{K_{\rm CCI}}} s}} \right. \kern-\nulldelimiterspace} s}$ is the transfer function of the PI regulator, ${f_{\rm VF}}(s) = {{{K_{\rm VF}}} \mathord{\left/{\vphantom {{{K_{\rm VF}}} {\left( {{T_{\rm VF}}s + 1} \right)}}} \right.\kern-\nulldelimiterspace} {\left( {{T_{\rm VF}}s + 1} \right)}}$ is a first-order filter that mitigates the high-frequency components of the voltage feed-forward signals, and ${\vec I^{\rm ref}} = I_d^{\rm ref} + jI_q^{\rm ref}$ is the current reference vector that comes from the power control.

Substituting (\ref{eq:Lf}) into (\ref{eq:current_control}) yields
\begin{equation}
{G_I}(s) \times {\vec I^{\rm ref}} - {Y_{\rm VF}}(s) \times \vec V = \vec I_C\,,		\label{eq:current_track}
\end{equation}
where
\begin{equation}
{G_I}(s) = \frac{{P{I_{\rm CC}}(s)}}{{s{L_F} + P{I_{\rm CC}}(s)}},\;{Y_{\rm VF}}(s) = \frac{{1 - {f_{\rm VF}}(s)}}{{s{L_F} + P{I_{\rm CC}}(s)}}.	\label{eq:GI}
\end{equation}

We note that the above equations are obtained based on space vectors and complex transfer functions, and they can be conveniently transformed to matrix form considering the following equivalent transformation \cite{harnefors2007modeling}
\begin{equation}
\begin{split}
{y_d} + j{y_q} &= \left[ {{G_d}(s) + j{G_q}(s)} \right] \times \left( {{x_d} + j{x_q}} \right)\\
\Leftrightarrow \left[ {\begin{array}{*{20}{c}}
	{{y_d}}\\
	{{y_q}}
	\end{array}} \right] &= \left[ {\begin{array}{*{20}{c}}
	{{G_d}(s)}&{ - {G_q}(s)}\\
	{{G_q}(s)}&{{G_d}(s)}
	\end{array}} \right]\left[ {\begin{array}{*{20}{c}}
	{{x_d}}\\
	{{x_q}}
	\end{array}} \right].	\label{eq:transf}
\end{split}
\end{equation}

The control law of the SRF-PLL that determines the dynamics of the controller's rotating \emph{dq}-frame is
\begin{equation}
\theta  = \frac{\omega }{s} = \frac{1}{s} \times \left( {{K_{\rm PLLP}} + \frac{{{K_{\rm PLLI}}}}{s}} \right) \times {V_q} \buildrel \Delta \over = {f_{\rm PLL}}(s) \times {V_q}\,,		\label{eq:PLL}
\end{equation}
where $\theta$ (rad) is the phase of the controller's rotating \emph{dq}-frame and $\omega$ (rad/s) is the angular frequency. $K_{\rm PLLP}$ and $K_{\rm PLLI}$ are the parameters of the PI regulator.

The converter applies active power control and reactive power control which can be formulated by
\begin{equation}
\begin{split}
I_d^{\rm ref} &= P{I_{\rm PC}}(s) \times \left( {{P^{\rm ref}} - {P_E}} \right)\,,\\
I_q^{\rm ref} &= P{I_{\rm QC}}(s) \times \left( {{Q_E} - {Q^{\rm ref}}} \right)\,,		\label{eq:power_control}
\end{split}
\end{equation}
where $P{I_{\rm PC}}(s) = {K_{\rm PCP}} + {{{K_{\rm PCI}}} \mathord{\left/{\vphantom {{{K_{\rm PCI}}} s}} \right.\kern-\nulldelimiterspace} s}$ and $P{I_{\rm QC}}(s) = {K_{\rm QCP}} + {{{K_{\rm QCI}}} \mathord{\left/{\vphantom {{{K_{QCI}}} s}} \right.\kern-\nulldelimiterspace} s}$ are the transfer functions of the PI regulators, $P^{\rm ref}$ and $Q^{\rm ref}$ are the reference values, $P_E$ and $Q_E$ are the active and reactive power of the converter (see Fig.\ref{Fig_Single_Converter}) which can be calculated by
\begin{equation}
\begin{split}
{P_E} &= {V_d}{I_{Cd}} + {V_q}{I_{Cq}}\,,\\
{Q_E} &= {V_q}{I_{Cd}} - {V_d}{I_{Cq}}.		\label{eq:PQ}
\end{split}
\end{equation}
By linearizing (\ref{eq:PQ}) {around the equilibrium point} $(I_{Cd0},I_{Cq0},V_{d0},V_{q0})$ and combining it with (\ref{eq:current_track}) and (\ref{eq:power_control}) yields the converter-side equivalent admittance
\begin{equation}
- \left[ {\begin{array}{*{20}{c}}
	{\Delta {I_{Cd}}}\\
	{\Delta {I_{Cq}}}
	\end{array}} \right] = \left[ {\begin{array}{*{20}{c}}
	{{Y_{11}}(s)}&{{Y_{12}}(s)}\\
	{{Y_{21}}(s)}&{{Y_{22}}(s)}
	\end{array}} \right]\left[ {\begin{array}{*{20}{c}}
	{\Delta {V_d}}\\
	{\Delta {V_q}}
	\end{array}} \right]\,,		\label{eq:I_V_matrix}
\end{equation}
where
\begin{equation}
\begin{split}
{Y_{11}}(s) &= \frac{{{G_I}(s)P{I_{\rm PC}}(s){I_{Cd0}} + {Y_{\rm VF}}(s)}}{{{\rm{1 + }}{G_I}(s)P{I_{\rm PC}}(s){V_{d0}}}}\,,\\
{Y_{12}}(s) &= \frac{{{G_I}(s)P{I_{\rm PC}}(s){I_{Cq0}}}}{{{\rm{1 + }}{G_I}(s)P{I_{\rm PC}}(s){V_{d0}}}}\,,\\
{Y_{21}}(s) &= \frac{{{G_I}(s)P{I_{\rm QC}}(s){I_{Cq0}}}}{{1 + {G_I}(s)P{I_{\rm QC}}(s){V_{d0}}}}\,,\\
{Y_{22}}(s) &= \frac{{ - {G_I}(s)P{I_{\rm QC}}(s){I_{Cd0}} + {Y_{\rm VF}}(s)}}{{1 + {G_I}(s)P{I_{\rm QC}}(s){V_{d0}}}}.
\end{split}
\end{equation}

The equivalent admittance in (\ref{eq:I_V_matrix}) represent the converter-side dynamics in the controller's \emph{dq}-frame, and to derive the closed-loop dynamics of a multi-converter system, the equivalent admittances of the converters need to be in one common coordinate. In this paper, we choose the infinite bus's rotating \emph{dq}-frame as the global coordinate, whose angular frequency is a constant (i.e., $\omega_0 = 100\pi $ rad/s in this paper).

Consider the following coordinate transformation
\begin{equation}
\vec V \times {e^{j\theta }} = \vec V' \times {e^{j{\theta _G}}}\,,		\label{eq:V_transf}
\end{equation}
\begin{equation}
\vec I_C \times {e^{j\theta }} = \vec I'_C \times {e^{j{\theta _G}}}\,,		\label{eq:I_transf}
\end{equation}
where $\vec V' = {V'_d} + j{V'_q}$ and $\vec I'_C = {I'_{Cd}} + j{I'_{Cq}}$ are the corresponding voltage and current vectors in the global coordinate, $\theta _G$ is the phase of the global coordinate which meets $s\theta _G = \omega_0$. 

Linearizing (\ref{eq:PLL}), (\ref{eq:V_transf}) and (\ref{eq:I_transf}) {around the equilibrium point}  $(I_{Cd0},I_{Cq0},V_{d0},V_{q0},\theta_{0},\theta_{G0})$ and then substituting them into (\ref{eq:I_V_matrix}) yields the equivalent admittance as
\begin{equation}
\begin{split}
- \left[ {\begin{array}{*{20}{c}}
	{\Delta {I'_{Cd}}}\\
	{\Delta {I'_{Cq}}}
	\end{array}} \right] &= {\bf{Y'}}(s)\left[ {\begin{array}{*{20}{c}}
	{\Delta {V'_d}}\\
	{\Delta {V'_q}}
	\end{array}} \right]\,,\\
{\bf{Y'}}(s) &= {e^{J{\delta _0}}}\left[ {\begin{array}{*{20}{c}}
	{{Y_{11}}(s)}&{\frac{{{Y_{12}}(s) + {f_{\rm PLL}}(s) \times {I_{q0}}}}{{1 + {f_{\rm PLL}}(s) \times {V_{d0}}}}}\\
	{{Y_{21}}(s)}&{\frac{{{Y_{22}}(s) - {f_{\rm PLL}}(s) \times {I_{d0}}}}{{1 + {f_{\rm PLL}}(s) \times {V_{d0}}}}}
	\end{array}} \right]{e^{ - J{\delta _0}}}\,,		\label{eq:Y'}
\end{split}
\end{equation}
where ${\delta _0} = {\theta _0} - {\theta _{G0}}$, ${e^{J{\delta _0}}}$ is the matrix form of ${e^{j{\delta _0}}}$, and ${e^{-J{\delta _0}}}$ is the matrix form of ${e^{-j{\delta _0}}}$. {Note the asymmetry of the transfer admittance ${\bf{Y'}}(s)$ matrix due to the fact that the PLL (\ref{eq:PLL}) uses only the $q$-component of the voltage.}

Then, considering that ${\vec Y_C}(s)$ and ${\vec Z_g}(s)$ {(defined in (\ref{eq:Cf}) and (\ref{eq:Lg}))} remain the same when formulated in the global coordinate, by combining (\ref{eq:Cf}), (\ref{eq:Lg}) and (\ref{eq:Y'}) one obtains
\begin{equation}
\begin{split}
- \left[ {\begin{array}{*{20}{c}}
	{\Delta {I'_{d}}}\\
	{\Delta {I'_{q}}}
	\end{array}} \right] &= {{\bf{Y}}_{{\bf{C}}}}(s)\left[ {\begin{array}{*{20}{c}}
	{\Delta {U'_d}}\\
	{\Delta {U'_q}}
	\end{array}} \right]\,,\\
{{\bf{Y}}_{{\bf{C}}}}(s) &= {\left\{ {{{\left[ {{{\bf{Y}}_{\bf{CL}}}(s) + {\bf{Y'}}(s)} \right]}^{ - 1}} + {{\bf{Z}}_{\bf{g}}}(s)} \right\}^{ - 1}}\,,
\end{split}		\label{eq:Y_G}
\end{equation}
where ${\left[ {\begin{array}{*{20}{c}} {\Delta {I'_{d}}}&{\Delta {I'_{q}}} \end{array}} \right]^\top}$ and ${\left[ {\begin{array}{*{20}{c}} {\Delta {U'_d}}&{\Delta {U'_q}} \end{array}} \right]^\top}$ are the perturbed vectors of the converter's current output and terminal voltage in the global $dq$-frame, ${{{\bf{Y}}_{\bf{CL}}}(s)}$ and ${{{\bf{Z}}_{\bf{g}}}(s)}$ are the matrix forms of ${\vec Y_{\rm CL}}(s)$ and ${\vec Z_g}(s)$ {via transformation (\ref{eq:transf})}, respectively. It is worth mentioning that the model in (\ref{eq:Y_G}) {is an extension} of the admittance model developed in \cite{wang2018unified} by further taking into account the dynamics of power control loops and voltage feedforward control.

\end{document}